\definecolor{darkred}{rgb}{0.5,0.2,0.2}
\titleformat{\subsubsection}[runin]
  {\normalfont\normalsize\bfseries}{\thesubsubsection}{1em}{}
\def\@endtheorem{\endtrivlist}
\newcommand{\vect}[1]{\bm{#1}} 
\DeclareMathOperator*{\argmax}{arg\,max} 
\newcommand{\collection}[2]{(#1)_{#2}} 
\newcommand{\indicator}{\mathds{1}} 
\newcommand{\prob}{\mathbb{P}} 
\newcommand{\E}[1]{\mathbb{E}\left[ #1 \right]}  
\newcommand{\internalHistory}[2]{\mathcal{F}^{#1}_{#2}} 
\newcommand{\internalHistoryCollection}[1]{\mathbb{F}^{#1}}
\newcommand{\aHistory}[1]{\mathcal{F}_{#1}} 
\newcommand{\aHistoryCollection}{\mathbb{F}}
\newcommand{\intensityOfEvents}{\vect{\lambda}} 
\newcommand{\intensityOfEvent}[1]{\lambda_{#1}} 
\newcommand{\intensityOfEventsStates}{\vect{\tilde{\lambda}}}
\newcommand{\intensityOfEventState}[2]{\tilde{\lambda}_{#1#2}}
\newcommand{\hatintensityOfEventState}[2]{\hat{\tilde{\lambda}}_{#1#2}}
\newcommand{\integers}{\mathbb{N}} 
\newcommand{\realLine}{\mathbb{R}} 
\newcommand{\realLinePositive}{\mathbb{R}_{>0}} 
\newcommand{\realLineNonNegative}{\mathbb{R}_{\geq 0}} 
\newcommand{\numberOfEvents}{d_e} 
\newcommand{\eventSpace}{\mathcal{E}} 
\newcommand{\numberOfStates}{d_x} 
\newcommand{\stateSpace}{\mathcal{X}} 
\newcommand{\N}{\vect{N}} 
\newcommand{\NofEvent}[1]{N_{#1}} 
\newcommand{\Nhybrid}{\vect{\tilde{N}}} 
\newcommand{\NhybridOfEventState}[2]{\tilde{N}_{#1#2}} 
\newcommand{\baseRate}[1]{\nu_{#1}} 
\newcommand{\baseRates}{\vect{\nu}} 
\newcommand{\kernels}{\vect{k}} 
\newcommand{\kernelAtoB}[2]{k_{#1 #2}} 
\newcommand{\kernelAtoBHat}[2]{\hat{k}_{#1 #2}} 
\newcommand{\transitionProbabilities}{\vect{\phi}} 
\newcommand{\transitionProbabilityIfEvent}[1]{\vect{\phi}_{#1}} 
\newcommand{\transitionProbabilityIfEventFromAtoB}[3]{\phi_{#1}(#2, #3)}
\newcommand{\numberOfParams}{p}  
\newcommand{\kernelsParams}{\vect{\theta}}  
\newcommand{\likelihood}{\mathcal{L}} 
\newcommand{\hatTransitionProbabilityIfEventFromAtoB}[3]{\hat{\phi}_{#1}(#2, #3)}
\newcommand{\hatIntensityOfEvent}[1]{\hat{\lambda}_{#1}} 
\newcommand{\residualOfEvent}[2]{r^{#1}_{#2}} 
\newcommand{\residualOfEventState}[3]{\tilde{r}^{#1 #2}_{#3}} 
\newcommand{\impactCoeff}[3]{\alpha_{#1 #2 #3}}  
\newcommand{\impactCoeffHat}[3]{\hat{\alpha}_{#1 #2 #3}}  
\newcommand{\impactCoeffs}{\vect{\alpha}}
\newcommand{\decayCoeff}[3]{\beta_{#1 #2 #3}}  
\newcommand{\decayCoeffHat}[3]{\hat{\beta}_{#1 #2 #3}}  
\newcommand{\decayCoeffs}{\vect{\beta}}
\newcommand{\event}[1]{{\fontfamily{phv}\selectfont \small #1}}
\newcommand{\state}[1]{{\fontfamily{phv}\selectfont \small #1}}
\newcommand\mikko[1]{\textcolor{black}{#1}}
\newcommand\maxime[1]{\textcolor{black}{#1}}
\newcommand\mikkoagain[1]{\textcolor{black}{#1}}
\newcommand\mikkorev[1]{\textcolor{black}{#1}}
\begin{document}

\thispagestyle{firstpg}
\begin{center}
\large \textbf{STATE-DEPENDENT HAWKES PROCESSES AND\\THEIR APPLICATION TO LIMIT ORDER BOOK MODELLING} \\
	\vspace{1cm}
	\large  Maxime Morariu-Patrichi\footnote{Department of Mathematics, Imperial College London, South Kensington Campus, London SW7 2AZ, UK.} \hspace{1cm} Mikko S. Pakkanen\footnote{Department of Mathematics and Data Science Institute, Imperial College London, South Kensington Campus, London SW7 2AZ, UK and CREATES, Aarhus University, Aarhus, Denmark. E-mail: \href{mailto:m.pakkanen@imperial.ac.uk}{\texttt{m.pakkanen@imperial.ac.uk}} URL: \url{http://www.mikkopakkanen.fi}} $\mbox{\hspace{0cm}}$ \\
	\vspace{0.5cm}
	\normalsize {\mydate\today}
	\vspace{1cm}
\end{center}

\begin{abstract}
We \mikko{study statistical aspects of} state-dependent Hawkes processes, which are an extension of Hawkes processes where a \mikko{self- and cross-exciting} counting process and a state process are fully coupled, \mikko{interacting with each other}. \mikko{The excitation kernel of the counting process depends on the state process that, reciprocally, switches state when there is an event in the counting process.} We \mikko{first establish the} existence and uniqueness \mikko{of state-dependent Hawkes processes and explain how they can be simulated.} \mikko{Then we develop maximum likelihood estimation methodology for parametric specifications of the process. 
We apply state-dependent Hawkes processes to high-frequency limit order book data, allowing us to build a novel model that captures the feedback loop between the order flow and the shape of the limit order book. We estimate two specifications of the model, using the bid--ask spread and the queue imbalance as state variables, and find that excitation effects in the order flow are strongly state-dependent. Additionally, we find that the endogeneity of the order flow, measured by the magnitude of excitation, is also state-dependent, being more pronounced in disequilibrium states of the limit order book.}
\end{abstract}

\bigskip
\noindent \textbf{Keywords:} \mikko{Hawkes process}, high-frequency financial data, market microstructure, \mikko{limit order book}, \mikko{maximum likelihood estimation}, \mikko{endogeneity}.

\bigskip

\noindent \mikko{\textbf{MSC 2010 Classification:} 60G55, 62M09, 62P05}

\noindent \mikko{\textbf{JEL Classification:} C58, C51, G17}

\numberwithin{equation}{section}

\section{Introduction}

Hawkes processes are a class of \mikko{self- and cross-exciting} point processes \mikko{where} events of different types \mikko{may increase the rate of} \mikko{new events of the same or other type} \citep{Hawkes:1971aa, laub:2015:Hawkes}, \mikko{dispensing with} the \mikko{independence-of-increments} property of Poisson processes. Given their ability to capture clustering \mikko{and contagion effects}, \mikko{Hawkes processes} have found numerous applications in finance in the last decade, \mikko{most notably in the modelling of high-frequency financial data} \citep{Embrechts2011,Bacry:2015aa}. \mikko{Concurrently, the last two decades have witnessed a major transformation of financial markets through the proliferation of electronic trading in order-driven markets}, where \mikko{traders} submit buy and sell orders to \mikko{an electronic trading platform that attempts to match the arriving orders} \citep{gould2013surveylimit}. \mikko{A significant proportion of the order flow is driven by high-frequency trading algorithms, whereby successive orders are sometimes separated in time only by a few microseconds.} As billions of orders are submitted each day, this \mikko{has given rise to} a profusion of new \mikko{market} data to study, \mikko{giving an unprecedented opportunity} to \mikko{statistically analyse and model price formation and market microstructure} at the \mikko{shortest possible} timescales. 

\mikko{From the point of view of statistical modelling, a key objective here is to find an} accurate\mikko{, yet parsimonious,} statistical dynamic \mikko{description} of the \mikko{\emph{limit order book}} (LOB), that is, the \mikko{record of orders for a given asset that currently remain unfilled}  \citep{gould2013surveylimit}. As one of the main \mikko{approaches to LOB modelling}, Hawkes processes have been used \mikko{to describe} the order flow, which \mikko{inherently} drives the evolution of the LOB \citep{Large:2007aa:MeasureResiliency, bacry2016estimation, rambaldi:2016:volumeHawkes, Carteaetal2018, Lu2018}. \mikko{This approach proceeds by specifying a set} of order types (i.e., orders \mikko{classified in terms of their impact} on the LOB) and \mikko{by fitting} a \mikko{multivariate} Hawkes process to their timestamps. Hawkes processes \mikko{have essentially provided us with a prism through which we can see the market reaction to the different order types and their interaction.} More generally, the \mikko{Hawkes-process} approach fits into the \mikko{longer} trend of using point processes to model \mikko{sequences} of \mikko{irregularly-spaced market} events in continuous time \citep{Engle1998,Hautsch:book:2004, Bauwens:2009aa:HighFrequencyPoint, Bacry:2013:PriceHawkes, BacryMuzy:2014:HawkesPrice}.

\mikko{Whilst LOB models based on Hawkes processes have been rather successful in describing the dynamics of the order flow, they are unable to incorporate any endogenous \emph{state variables} describing the LOB, such as prices, volumes or the bid--ask spread, nor their} influence on the arrival rate of orders.
 We note that Hawkes processes have been used \mikko{to build full-fledged} LOB models \citep{MuniToke2011FullLOBModel, Abergel2015}, but \mikko{the arrival rate of orders in these models is not influenced by the state of the LOB}. In fact, Hawkes processes \mikko{are complemented by the other} main approach to \mikko{LOB modelling,} based on continuous-time Markov chains \citep{huang:2015:QueueReactive, huang:2015:ergodicit, Cartea:Donelly:2018}. \mikko{This alternative approach extends some of the earlier} zero-intelligence models \citep{smith:2003:continuousDoubleAuction, Cont2010, cont2013:priceDynamics} \mikko{and postulates that} the arrival rate of orders is \mikko{driven by the state of the LOB alone, making the state part of the model.} This has the merit of introducing a feedback loop between the order flow and the \mikko{state of the LOB}, but \mikko{it omits the self- and cross-excitation effects evidenced by the empirical work based on Hawkes processes.}

In short, each of these two approaches \mikko{to LOB modelling} has \mikko{desirable} qualities that the other lacks\mikko{, as also} noticed by \citet{bacry2016estimation}, \citet{Taranto2016}, \citet{Gonzales2017} and \citet{morariu:2017:hybrid}. \mikko{In fact,} \citet{Gonzales2017} \mikko{present empirical results, demonstrating} that the type of the next order depends on both the type of the previous order and the \mikko{state of the LOB}, which \mikko{highlights the need for a more general modelling framework that can amalgamate the two approaches.}

\mikko{The purpose of this paper is to introduce a novel, \emph{state-dependent} extension of a Hawkes process, in the context of statistical modelling. The new model, formulated in Section \ref{sec:state-hawkes}, pairs a multivariate} point process $\N$, \mikko{governed by a past-dependent stochastic intensity,} with \mikko{an observable} state process $X$. \mikkoagain{The intensity of the point process $\N$ is Hawkes process-like, characteristically depending on the past of $\N$, but also influenced by the past of $X$.} The process $X$ switches state when there is an event in $\N$, according to a Markov transition matrix that depends on the type of the event. This \mikkoagain{\emph{two-way}} interaction between $\N$ and $X$ makes them fully coupled, just like the order flow and \mikko{the state of the LOB in an order-driven market. It also distinguishes the model from the existing regime switching Hawkes processes \citep{Wang2012,Cohen:2013:regimeSwitchingHawkes,VinkovskayaEkaterina2014Appm,Swishchuk:2017:compoundHawkes}, where the state process evolves exogenously, receiving no feedback from the point process. Mathematically, we lift the pair $(\N, X)$ to a higher-dimensional ordinary point process, which allows us to invoke the theory of \emph{hybrid marked point processes} \citep{morariu:2017:hybrid} to establish the existence and uniqueness of non-explosive solutions and derive a simulation algorithm for the process.} 

\mikko{Further, in Section \ref{sec:MLE} we develop a maximum likelihood (ML) estimation method for parametric specifications of a state-dependent Hawkes process, extending the ML methodology for ordinary Hawkes processes pioneered by \citet{Ozaki1979a}. We find that the likelihood function of a state-dependent Hawkes process has very convenient separable form that makes it possible to estimate the transition probabilities of $X$ independently of the parameters of the point process $\N$, which is remarkable given that $\N$ and $X$ are fully coupled. The upshot of this useful property is that, as far as ML estimation is concerned, state-dependent Hawkes processes are no harder to estimate than ordinary multivariate Hawkes processes.}

\mikko{In Section \ref{sec:application} we finally apply the new model and methodology to high-frequency financial data, estimating  for the first time an LOB model that accommodates an explicit feedback loop between the state of the LOB and the order flow with self- and cross-excitation effects. More concretely, we estimate two specifications of a state-dependent Hawkes process on level-I LOB data on the stock of Intel (INTC) from Nasdaq, from June 2017 until May 2018, using the \emph{bid--ask spread} and \emph{queue imbalance}, respectively, as the state variable.} The latter \mikko{variable} can be interpreted as an indicator of the \mikko{shape of the LOB} and is a popular price-predictive signal \citep{Cartea:Donelly:2018}. \mikko{Our estimation results reveal} that the magnitude \mikko{and duration of excitation effects in the order flow depend} significantly on \mikko{both} state variables\mikko{, being effective at timescales ranging from 100 microseconds to 100 milliseconds.} \mikko{Since human reaction times have been measured to be of the order of 200 milliseconds, and longer for tasks involving choice \citep{Kosinski2006},} this is a clear sign of the \mikko{highly algorithmic nature of the order flow in modern electronic markets.} 
\mikko{Moreover, we find that the high level of endogeneity of the order flow, dubbed \emph{critical reflexivity} by \citet{Filimonov2012} and \citet{Hardiman2013}, and quantifiable here by the spectral norm of the Hawkes excitation kernel, is state-dependent and, intriguingly, more pronounced in what can be seen as \emph{disequilibrium states} of the LOB.}

In the \hyperref[sec:appendix]{Appendix} \mikko{of this paper}, we prove the \mikko{theoretical results of} Sections \ref{sec:state-hawkes} and \ref{sec:MLE} and \mikko{gather some auxiliary details on ML estimation. A Python library} \maxime{called \texttt{mpoints} that  implements the models and estimation methodology of this paper is available from \url{https://mpoints.readthedocs.io}.}

\section{State-dependent Hawkes processes} \label{sec:state-hawkes}

\subsection{\mikko{Preliminaries and definition}}

To \mikko{set the stage for} state-dependent Hawkes processes, we \mikko{first} review some central concepts of point process theory \citep{bremaud:1981:MartingaleDynamics, daleyVereJonesVolume1}. A $\numberOfEvents$-dimensional multivariate point process consists of an increasing sequence of \mikko{positive random} times $\collection{T_n}{n\in\integers}$ and a \mikko{matching} sequence of random marks $\collection{E_n}{n\in\integers}$ in $\eventSpace:=\{1,\ldots,\numberOfEvents\}$. \mikko{We interpret, for any $n \in \integers$, the pair $(T_n,E_n)$ as an event of type $E_n$ occurring at time $T_n$.} A point process can be \mikko{identified} by its counting process \mikko{representation} $\N:=(\NofEvent{1},\ldots,\NofEvent{\numberOfEvents})$, where
\begin{equation*}
	\NofEvent{e}(t) := \sum_{n\in\integers}\indicator_{\{T_n\leq t, E_n=e\}}, \quad t\geq 0, \quad e\in\eventSpace,
\end{equation*}
counts how many events of type $e$ have occurred \mikko{by} time $t$. \mikko{The counting process} $\N$ is said to be \mikko{\emph{non-explosive} if} $\lim_{n\rightarrow\infty}T_{n}=\infty$ with probability one. Given a filtration $\aHistoryCollection=\collection{\aHistory{t}}{t\geq 0}$ to which $\N$ is adapted, we say that a non-negative $\aHistoryCollection$-predictable process $\intensityOfEvents=(\intensityOfEvent{1},\ldots,\intensityOfEvent{\numberOfEvents})$ is the \mikko{\emph{$\aHistoryCollection$-intensity}} of $\N$ if
\begin{equation*}
	\E{\NofEvent{e}(t) - \NofEvent{e}(s) \,|\, \aHistory{s}} = \E{\int_s^t \intensityOfEvent{e}(u)du \,|\, \aHistory{s}},\quad t\geq s\geq 0, \quad e \in\eventSpace.
\end{equation*}
\mikko{Intuitively}, $\intensityOfEvent{e}(t)$ is the \mikko{infinitesimal} rate of \mikko{new} events of type $e$ at time $t$.

To \mikko{construct a} state-dependent Hawkes \mikko{process}, we \mikko{couple} the point process $\N$ with a c\`adl\`ag state process $\collection{X(t)}{t\geq 0}$ that takes values in a finite state space $\stateSpace:=\{1,\ldots,\numberOfStates\}$ and denote by $\internalHistoryCollection{\N, X}$ the natural filtration of the \mikko{pair} $(\N, X)$.
\theoremstyle{definition}
\newtheorem{state_dependent_hawkes}[algorithm]{Definition}
\begin{state_dependent_hawkes} \label{def:state_dependent_hawkes}
Let $\transitionProbabilities = \collection{\transitionProbabilityIfEvent{e}}{e\in\eventSpace}$ be a collection of $\numberOfStates\times\numberOfStates$ transition probability matrices, $\baseRates=(\baseRate{1},\ldots,\baseRate{\numberOfEvents})\in\realLinePositive^{\numberOfEvents}$ and $\kernels = \collection{\kernelAtoB{e'}{e}}{e',e\in\eventSpace}$, where $\kernelAtoB{e'}{e}:\realLinePositive\times\stateSpace\rightarrow\realLineNonNegative$.
The \mikko{pair} $(\N, X)$ is a \mikko{\emph{state-dependent Hawkes process}} with transition \mikko{distribution} $\transitionProbabilities$, base \mikko{rate vector} $\baseRates$ and \mikko{excitation kernel} $\kernels$\mikko{, abbreviated as $\mathrm{sdHawkes}(\transitionProbabilities,\baseRates,\kernels)$,} if
\begin{enumerate}[label=\textup{(\roman*)}]
	\item $\vect{N}$ \mikko{has} $\internalHistoryCollection{\vect{N}, X}$-intensity \mikko{$\intensityOfEvents$} that satisfies
	\begin{equation} \label{eq:event_intensity}
		\intensityOfEvent{e}(t) = \baseRate{e} + \sum_{e'\in\eventSpace}\int_{[0,t)}\kernelAtoB{e'}{e}(t-s, X(s))d\NofEvent{e'}(s) , \quad t\geq 0,\quad e\in\eventSpace;
	\end{equation}
	\item $X$ \mikko{is piecewise constant and} jumps only at the event times $\collection{T_n}{n\in\integers}$\mikko{, so that}
	\begin{equation}\label{eq:trans_prob}
		\prob\left(X(T_{n}) = x \,|\, E_n, \internalHistory{\N, X}{T_{n}-}\right) = \transitionProbabilityIfEventFromAtoB{E_n}{X(T_{n}-)}{x} ,\quad  n\in\integers, \quad x\in\stateSpace,
	\end{equation}
\end{enumerate} 
\mikko{where $\internalHistory{\N, X}{T_{n}-} := \bigvee_{\varepsilon > 0} \internalHistory{\N, X}{T_{n}-\varepsilon}$ and $X(T_{n}-) := \lim_{t \uparrow T_n} X(t)$.}
\end{state_dependent_hawkes}
\theoremstyle{plain}

\mikko{In applications to LOB modelling, t}he counting process $\N$ and state process $X$ will represent the order flow and the state of the \mikko{LOB}, respectively. \mikko{An event of type $e'\in\eventSpace$ at time $T=T_n$, for some $n \in \integers$, increases the infinitesimal rate of new events of type $e\in\eventSpace$ at time $t>T$ by $\kernelAtoB{e'}{e}(t-T,X(T))$.} The novelty here is that the self- and cross-excitation effects \mikko{now} depend on the state process $X$. \mikko{Reciprocally,} at each event in $\N$\mikko{, the state process $X$ may switch to a new state according to the probability \eqref{eq:trans_prob} that depends on the event type.}
Thereby, \mikko{mimicking the mechanics of the LOB}, \mikko{the processes} $\N$ and $X$ are fully coupled. \mikko{In the case} $\numberOfStates=1$, \mikko{Definition \ref{def:state_dependent_hawkes} reduces to that of an ordinary linear Hawkes process.}
 A \mikko{simulated} sample path of a state-dependent Hawkes process with $\numberOfEvents=1$, $\numberOfStates=2$ and exponential \mikko{kernel} is shown in Figure \ref{fig:simulation}. In this example, the process exhibits self-excitation only in the second state.

\theoremstyle{definition}
\newtheorem{non-Markov}[algorithm]{Remark}
\begin{non-Markov}
\begin{enumerate}[label=\textup{(\roman*)}]
\item \mikkoagain{The state space $\stateSpace$ is chosen to be finite here on the grounds of practical estimation. Theoretically, a state-dependent Hawkes process can however be defined in more general (infinite) state spaces \citep[Example 2.15]{morariu:2017:hybrid}.}
\item \mikkoagain{The relationship \eqref{eq:trans_prob} does not imply that $X$ is a pure jump Markov process. In particular, the time to next state transition need not be conditionally exponentially distributed, since it is determined by the counting process $\N$, which nests a multitude of non-Markovian ordinary Hawkes processes, for instance.}
\item \mikkoagain{If for any $e \in \eventSpace$ the transition distribution $\big(\transitionProbabilityIfEventFromAtoB{e}{x'}{x}\big)_{x \in \stateSpace}$ does not depend on the previous state $x'\in \stateSpace$, the process $\N$ reduces to a \emph{marked Hawkes process}, as described in \citet[Section 2.2.1]{Bacry:2015aa}.} 
\end{enumerate}
\end{non-Markov}

\begin{figure}[h]
  \centering
    \includegraphics[width=0.6\textwidth]{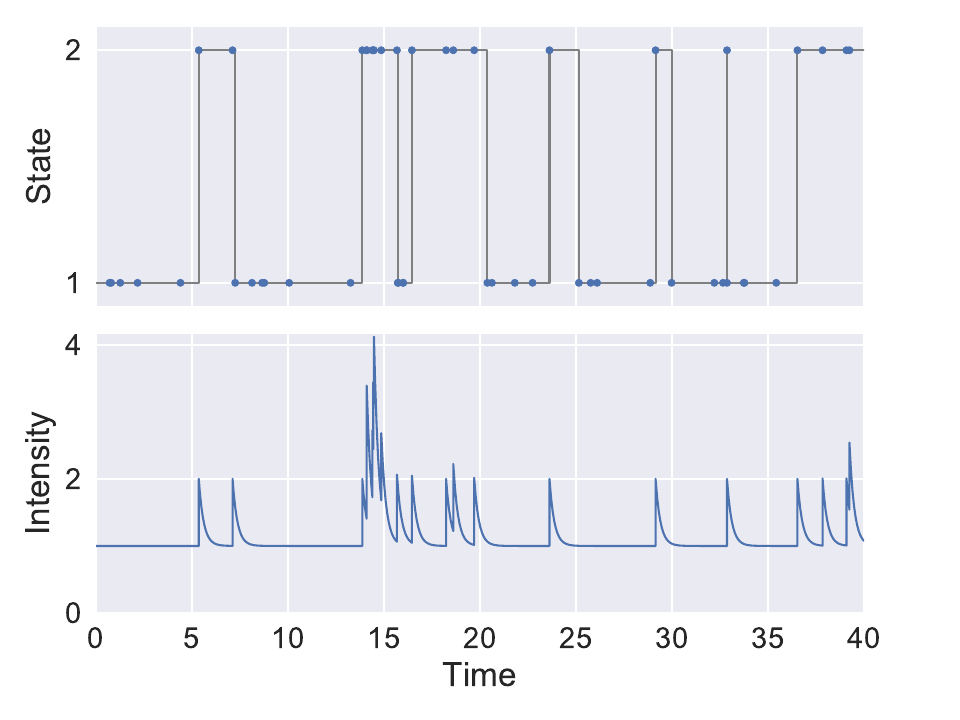}
    \caption[Simulation of a state-dependent Hawkes process]{Simulation of a state-dependent Hawkes process with $\numberOfEvents = 1 $, $\numberOfStates =2$. The upper plot shows the evolution of the state process. The blue dots indicate the event times and the lower plot represents the intensity. \mikko{The process is specified so that $\nu_1 = 1$ and $k_{11}(t,x) = \exp(-\maxime{4} t)\indicator_{\{x=2\}}$, that is, i}n state 2 the process exhibits \mikko{exponential} self-excitation whereas no self-excitation occurs in state 1.}
	\label{fig:simulation}
\end{figure}

\subsection{\mikko{Existence and uniqueness of non-explosive solutions}} \label{subsec:existence}

\mikko{Since Definition \ref{def:state_dependent_hawkes} is implicit in the sense that the counting process $\N$ is defined by intensity $\intensityOfEvents$ that depends on the past of $\N$ and $X$, care is needed to establish the existence and uniqueness of a pair $(\N,X)$ that solves \eqref{eq:event_intensity} and \eqref{eq:trans_prob} so that $\N$ is non-explosive. To this end, we lift $(\N, X)$ to} a $\numberOfEvents\numberOfStates$-dimensional multivariate point process $\Nhybrid=\collection{\NhybridOfEventState{e}{x}}{e\in\eventSpace, x\in\stateSpace}$, where $\NhybridOfEventState{e}{x}$ counts the number of events of type $e$ after which the state is $x$, \mikko{formally,}
\begin{equation*}
\mikko{\NhybridOfEventState{e}{x}(t) = \sum_{n\in\integers}\indicator_{\{T_n\leq t, E_n=e, X(T_n)=x\}}, \quad t\geq 0, \quad e\in\eventSpace, \quad x\in\stateSpace.}
\end{equation*}
\mikko{The marks corresponding to $\Nhybrid$ are given by $(E_n, X_n)$, $n \in \integers$, where $E_n$ is as above and} $X_n := X(T_n)$ is the value of the state process following the \mikko{$n$th} event and $X_0$ is the initial state.
Thus, given $\Nhybrid$, \mikko{the state} $X(t)$ can be recovered from the most recent mark at time $t$\mikko{, which makes the relationship between $\Nhybrid$ and $(\N,X)$ bijective.} \mikko{In fact, state-dependent Hawkes processes were first introduced using this representation in \citet{morariu:2017:hybrid}.}

By applying the general characterisation result in \citet{morariu:2017:hybrid}, the dynamics of $\Nhybrid$ can be expressed in terms of the dynamics of $(\N, X)$, and vice versa. The natural filtration of $\Nhybrid$ is denoted by $\internalHistoryCollection{\Nhybrid}$.
\newtheorem{characterisation}[algorithm]{Theorem}
\begin{characterisation} \label{thm:characterisation}
\mikko{The pair} $(\N, X)$ is a non-explosive \mikko{$\mathrm{sdHawkes}(\transitionProbabilities,\baseRates,\kernels)$ process} if and only if $\Nhybrid$ is non-explosive, \mikko{admitting} $\internalHistoryCollection{\Nhybrid}$-intensity $\intensityOfEventsStates$ that satisfies
\begin{equation} \label{eq:hybrid_intensity}
	\intensityOfEventState{e}{x}(t) = \transitionProbabilityIfEventFromAtoB{e}{X(t)}{x}\left( \baseRate{e} + \sum_{e'\in\eventSpace, x'\in\stateSpace}\int_{[0,t)}\kernelAtoB{e'}{e}(t-s, x')d\NhybridOfEventState{e'}{x'}(s) \right),\quad t\geq 0, \quad e\in\eventSpace,\quad x\in\stateSpace.
\end{equation}
\end{characterisation}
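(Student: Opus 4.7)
The plan is to leverage the bijective correspondence between the pair $(\N, X)$ and the lifted hybrid process $\Nhybrid$ and then invoke the general characterisation of hybrid marked point processes from \citet{morariu:2017:hybrid} to translate between the two intensity representations. First I would verify that the mapping $(\N,X) \mapsto \Nhybrid$ is measurable and invertible in the pathwise sense: $\Nhybrid$ is constructed explicitly from $\N$ and $X$, while conversely $\N$ is recovered via $\NofEvent{e}=\sum_{x\in\stateSpace}\NhybridOfEventState{e}{x}$, and $X(t)$ is recovered as the second coordinate of the mark attached to the most recent jump of $\Nhybrid$ before $t$ (with $X(0)=X_0$). This bijection on paths implies $\internalHistoryCollection{\N,X}=\internalHistoryCollection{\Nhybrid}$ as filtrations, and the event times of $\N$ and $\Nhybrid$ coincide, so non-explosion of one is equivalent to non-explosion of the other.

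For the forward implication, suppose $(\N, X)$ is a non-explosive $\mathrm{sdHawkes}(\transitionProbabilities,\baseRates,\kernels)$ process. I would first rewrite the intensity in \eqref{eq:event_intensity} using the key identity
\begin{equation*}
\int_{[0,t)}\kernelAtoB{e'}{e}(t-s, X(s))d\NofEvent{e'}(s) = \sum_{x'\in\stateSpace}\int_{[0,t)}\kernelAtoB{e'}{e}(t-s, x')d\NhybridOfEventState{e'}{x'}(s),
\end{equation*}
which holds because $d\NofEvent{e'}(s)$ charges only the jump times $s=T_n$ with $E_n=e'$, at which $X(s)=X_n$ is precisely the index picked out by the inner sum via $\NhybridOfEventState{e'}{x'}$. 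Then I would combine \eqref{eq:event_intensity} with \eqref{eq:trans_prob} to conclude that the $\internalHistoryCollection{\Nhybrid}$-intensity of the component $\NhybridOfEventState{e}{x}$ is the rate $\intensityOfEvent{e}(t)$ of events of type $e$ times the conditional probability $\transitionProbabilityIfEventFromAtoB{e}{X(t-)}{x}$ of landing in state $x$ given such an event; this is exactly the statement that the integrated versions form martingales relative to $\internalHistoryCollection{\Nhybrid}$, and because $X$ is piecewise constant and predictable versions of $X(t-)$ and $X(t)$ agree outside the (countable) jump set, this yields \eqref{eq:hybrid_intensity}.

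For the converse, suppose $\Nhybrid$ is non-explosive with $\internalHistoryCollection{\Nhybrid}$-intensity given by \eqref{eq:hybrid_intensity}. Summing \eqref{eq:hybrid_intensity} over $x\in\stateSpace$ and using $\sum_{x\in\stateSpace}\transitionProbabilityIfEventFromAtoB{e}{X(t)}{x}=1$ together with the identity above (read in reverse) delivers \eqref{eq:event_intensity} as the $\internalHistoryCollection{\N,X}$-intensity of $\N$. To recover \eqref{eq:trans_prob}, I would use the fact that $\{\NhybridOfEventState{e}{x}\}_{x\in\stateSpace}$ partitions the jumps of $\NofEvent{e}$, so the conditional law of the post-event state given $E_n=e$ and $\internalHistory{\N,X}{T_n-}$ is read off from the ratios $\intensityOfEventState{e}{x}(T_n)/\intensityOfEvent{e}(T_n) = \transitionProbabilityIfEventFromAtoB{e}{X(T_n-)}{x}$, yielding \eqref{eq:trans_prob}.

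The main obstacle I anticipate is a careful bookkeeping between the predictable and càdlàg versions of $X$ that appear in \eqref{eq:event_intensity} and \eqref{eq:hybrid_intensity}: since the intensities are, by convention, $\internalHistoryCollection{}$-predictable, the factor $\transitionProbabilityIfEventFromAtoB{e}{X(t)}{x}$ in \eqref{eq:hybrid_intensity} must really be interpreted via $X(t-)$, and likewise $X(s)$ inside the convolution integral in \eqref{eq:event_intensity} must be the pre-jump value except at jump times of $\N$, where a subtle coincidence lets it equal the post-event value $X_n$ that is encoded in $\Nhybrid$. Once these predictability conventions are pinned down, both directions reduce to an application of the general intensity characterisation for hybrid marked point processes established in \citet{morariu:2017:hybrid}, together with the algebraic identities above.
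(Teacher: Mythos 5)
Your proposal is correct in outline and follows essentially the same route as the paper: lift $(\N,X)$ to $\Nhybrid$ via the pathwise bijection and then invoke the general characterisation of hybrid marked point processes (Theorem 2.13 of \citet{morariu:2017:hybrid}) to pass between the two intensity representations; your algebraic identities (rewriting the convolution against $d\NofEvent{e'}$ as a sum over $x'$ of integrals against $d\NhybridOfEventState{e'}{x'}$, summing the product form over $x$, and reading off $\transitionProbabilities$ from intensity ratios) are precisely the mechanics that the cited theorem packages. The one point you gloss over is the point to which the paper's proof is entirely devoted: in the forward direction, Definition \ref{def:state_dependent_hawkes} only guarantees that the aggregated process $\N$ admits an intensity, and before the characterisation theorem can be applied one must check that the lifted process $\Nhybrid$ itself admits an $\internalHistoryCollection{\Nhybrid}$-intensity, i.e.\ that the compensator of each component $\NhybridOfEventState{e}{x}$ has no singular part. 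The paper verifies this via Jacod's representation of the compensator in terms of the conditional mark distributions $G_n$ together with uniqueness of compensators. Your ``thinning'' step --- asserting that $\NhybridOfEventState{e}{x}$ has intensity $\transitionProbabilityIfEventFromAtoB{e}{X(t-)}{x}\,\intensityOfEvent{e}(t)$ --- would deliver absolute continuity as a byproduct if made rigorous, but as written it presupposes the existence of the intensity it is meant to produce (note that $\indicator_{\{X(T_n)=x\}}$ involves the post-jump state, so $\NhybridOfEventState{e}{x}$ is not a predictable integral against $d\NofEvent{e}$ and the martingale property does not follow from predictable integration alone). Your handling of the $X(t)$ versus $X(t-)$ convention is correct and matches the paper's implicit treatment.
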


The above theorem in fact shows that state-dependent Hawkes processes belong to the class of \mikko{\emph{hybrid marked point processes}} studied in \citet{morariu:2017:hybrid}. The general existence and uniqueness results therein \mikko{apply to the present class of processes} as follows.
\newtheorem{existence_uniqueness}[algorithm]{Theorem}
\begin{existence_uniqueness} \label{thm:existence_uniqueness}
\mikko{A unique, non-explosive $\mathrm{sdHawkes}(\transitionProbabilities,\baseRates,\kernels)$ process exists} if one of the following two conditions is satisfied:
\begin{enumerate}[label=\textup{(\roman*)}]
	\item\label{item:bounded} the \mikko{components of} $\kernels$ are bounded \mikko{functions};
	\item\label{item:integral} $\sum_{e'\in\eventSpace, x'\in\stateSpace}\int_{0}^{\infty}\kernelAtoB{e'}{e}(t,x')dt < (\max_{x'\in\stateSpace}\transitionProbabilityIfEventFromAtoB{e}{x'}{x})^{-1}$ for all $e\in\eventSpace$ \mikko{and} $x\in\stateSpace$.
\end{enumerate}
\end{existence_uniqueness}

\mikko{Condition \ref{item:bounded} above suffices for the purposes of the present paper, where we apply bounded, exponential kernels. However, condition \ref{item:integral} is included here for completeness as it reduces to the classical stability condition of \citet{Massoulie:1998aa:StabilityResults} in the case $\numberOfStates=1$.}

\subsection{Simulation} \label{subsec:simulation}

Another implication of Theorem \ref{thm:characterisation} is that the simulation of a state-dependent Hawkes process can be reduced to the simulation of a multivariate point process with an intensity given by \eqref{eq:hybrid_intensity} and, thus, many simulation techniques from point process theory can be reused \citep{Lewis1976:SimulationPoissonProcesses, daleyVereJonesVolume1}.

In fact, the sample path in Figure \ref{fig:simulation} \mikko{has been} generated using \mikko{\emph{Ogata's thinning algorithm}} \citep{Ogata:1981aa}\mikko{, which is an exact simulation algorithm and is adaptable for state-dependent Hawkes processes as follows. We write $R(t):=\sum_{e\in\eventSpace, x\in\stateSpace}\intensityOfEventState{e}{x}(t)  = \sum_{e\in\eventSpace}\intensityOfEvent{e}(t)$, which is a function of all $(T_{n},E_{n},X_{n})$ such that $T_n < t$.}

\begin{algorithm}[!h]
\caption{Iterative step in Ogata's thinning algorithm for state-dependent Hawkes processes}\label{alg:Ogata}
\begin{algorithmic}[1]
\Require $(T_{i},E_{i},X_{i})_{i=1,\ldots,n-1}$
\State set $T := T_{n-1}$
\State set $\xi := 0$
\While{$\xi = 0$} 
\State draw $U \sim \mathrm{Exp}(R(T))$
\State set $\xi := 1$ with probability $\frac{R(T+U)}{R(T)}$
\State set $T := T+U$
\EndWhile
\State set $T_n := T$
\State draw $E_n \in \eventSpace$ with probabilities proportional to $\collection{\intensityOfEvent{e}(T_n)}{e\in\eventSpace}$ 
\State draw $X_n \in\stateSpace$ with probabilities $\collection{\transitionProbabilityIfEventFromAtoB{E_n}{X_{n-1}}{x}}{x\in\stateSpace}$
\State \Return $(T_n,E_n,X_n)$
\end{algorithmic}
\end{algorithm}

\theoremstyle{definition}
\newtheorem{simulation}[algorithm]{Remark}
\begin{simulation} \label{rem:simulation}
\begin{enumerate}[label=\textup{(\roman*)}]
\item \mikko{In Algorithm \ref{alg:Ogata},} we are implicitly assuming that the \mikko{components of the} kernel $\kernels$ \mikko{are} non-increasing, which guarantees that $R(t')\leq R(t)$ for all $t'\in[t,T_n]$.  In the general case, one needs to define $R(t)$ \mikko{so} that it bounds the total intensity $\sum_{e\in\eventSpace}\intensityOfEvent{e}(t')$ for all $t'\in[t,T_n]$.
\item \mikko{Lines 9--10 of Algorithm \ref{alg:Ogata} use the product form~\eqref{eq:hybrid_intensity} to simulate the marks, which avoids} the computation of $\numberOfEvents\numberOfStates$ products at the cost of generating an additional random number.
\end{enumerate}
\end{simulation}
\theoremstyle{plain}

\subsection{\mikkorev{Comparison with related models of limit order books}}
\mikkorev{We will now briefly compare state-dependent Hawkes processes, in the sense of Definition \ref{def:state_dependent_hawkes}, to some recent, closely related models, which similarly aim to couple a point process to a state process in the context of LOB modelling.}

\mikkorev{
The \emph{regime-switching Hawkes process} of \citet{VinkovskayaEkaterina2014Appm} can be seen as a special case of Definition \ref{def:state_dependent_hawkes}, with the exception that the dynamics of the state process $X$ are not modelled. (Effectively, this means that the counting process $\N$ is then specified conditional on a realisation of $X$, precluding two-way interaction between $\N$ and $X$.) In her empirical application, \citet{VinkovskayaEkaterina2014Appm} 
estimates a four-dimensional version of the model for arriving orders of four different types in the New York Stock Exchange Trades and Quotes (TAQ) data. She employs the bid--ask spread as a state variable, like we do in one of our models in Section~\ref{sec:application}.}

\mikkorev{
\citet{Swishchuk:2017:compoundHawkes} construct a \emph{compound Hawkes process} using a one-dimensional ordinary Hawkes process $N$ (possibly non-linear) and a Markov chain $(Y_n)_{n \in \mathbb{N}}$ in a finite state space $\mathcal{Y}$, independent of $N$. The actual process $(S(t))_{t \geq 0}$, which is used as a model of prices, is given by  
\begin{equation*}
S(t) := S(0) + \sum_{n=1}^{N(t)} a(Y_n), \quad t\geq 0, 
\end{equation*}
where $a$ is a function from $\mathcal{Y}$ to $\mathbb{R}$. Whilst having similar ingredients, the compound Hawkes process is otherwise unrelated to state-dependent Hawkes processes. In particular, the Markov chain $(Y_n)_{n \in \mathbb{N}}$ in the compound Hawkes process does not influence the rate of events, that is, price changes. \citet{Swishchuk:2017:compoundHawkes} establish limit theorems for the process, decribing its long-term behaviour, and also empirically estimate it using Nasdaq Stock Market LOB data.}

\mikkorev{
Note that in Definition \ref{def:state_dependent_hawkes} the excitation effect of each event is predicated upon the state prevailing at the time the event occurs. Alternatively, we could make the level of excitation track the current state. We could also make the base rates track the current state. Applying these modifications to equation \eqref{eq:event_intensity} in Definition \ref{def:state_dependent_hawkes} yields an analogous intensity satisfying
	\begin{equation} \label{eq:event_intensity_g2}
		\intensityOfEvent{e}(t) = \baseRate{e}(X(t-)) + \sum_{e'\in\eventSpace}\int_{[0,t)}\kernelAtoB{e'}{e}(t-s, X(t-))d\NofEvent{e'}(s) , \quad t\geq 0,\quad e\in\eventSpace.
	\end{equation}
Whether it is ultimately more natural to use $X(s)$ or $X(t-)$ in the excitation kernel is open to debate, but in the current context of LOB modelling, where the most significant excitation effects tend to be ephemeral, the difference is unlikely to be large in practice.}

\mikkorev{
Now, several recent LOB models conform to \eqref{eq:event_intensity_g2}. \citet{Cohen:2013:regimeSwitchingHawkes} introduce a one-dimensional \emph{Markov-modulated Hawkes process} following \eqref{eq:event_intensity_g2}, where $X$ is an exogenous Markov process in a finite state space (exogenous in the sense that the counting process does not influence $X$). The key feature of their work is that they assume $X$ to be unobservable, leading them to derive a filtering procedure for the estimation of the current state of $X$. \citet{Cohen:2013:regimeSwitchingHawkes} illustrate their methodology by estimating a regime switch in TAQ trade data during the US equity market flash crash on 6 May 2010.} 

\mikkorev{
Coinciding with the first preprint version of the present paper, \citet{Wuetal2019} develop a \emph{queue-reactive Hawkes process} based on \eqref{eq:event_intensity_g2}. In their model, $X$ is endogenous and carries information about queue lengths in the LOB, while the multi-dimensional counting process driven by the intensity \eqref{eq:event_intensity_g2} models events pertaining to these queues. \citet{Wuetal2019} estimate their model on German bond (Bund) and index (DAX) futures LOB data. Subsequently, \citet{Mounjid:2019:AsymptoticProperties} generalise the queue-reactive Hawkes process to a more general point process framework that allows for non-linearity and quadratic Hawkes structure. \citet{Mounjid:2019:AsymptoticProperties} additionally establish ergodicity for the model and also derive functional limit theorems for its long-term behaviour. They apply the model to evaluate and rank equities market makers on Euronext Paris.}

\mikkorev{
Finally, \citet{Fossetetal2020} develop a Hawkes process model of endogenous liquidity crises. They model the increase and decrease, respectively, in the bid--ask spread as a two-dimensional point process that conforms to \eqref{eq:event_intensity_g2}. The first component of the process follows an ordinary Hawkes process (possibly non-linear), whilst the second component has a state-dependent base rate, depending on the the current bid--ask spread. \citet{Fossetetal2020} then analyse the long-term stability of the model.}

\section{Parametric estimation via maximum likelihood} \label{sec:MLE}

	\mikko{Estimating the base rate vector} $\baseRates$ and kerne\mikko{l} $\kernels$ \mikko{of an ordinary Hawkes process has  become a vibrant} research topic in statistics and statistical finance literature. \mikko{Whilst the recent focus has mostly been on non-parametric methodology} \citep{Bacry2016, Kirchner:2017:estimation:Hawkes, Eichler2017, Sancetta2017a, Achab:2018aa}, \mikko{our aim in this paper is to extend} the classical parametric framework \citep{Ozaki1979a}, comprehensively summarised in  \citet{Bowsher2007}, to state-dependent Hawkes processes\mikko{, leaving non-parametric methodology for future work.} From now on, \mikko{we work with a kernel $\kernels = \kernels^{\kernelsParams}$} parametrised by a vector $\kernelsParams\in\realLine^{\numberOfParams}$.

\subsection{Likelihood \mikko{function}}

We know from Subsections \ref{subsec:existence} and \ref{subsec:simulation} that a state-dependent Hawkes process $(\N, X)$ can be \mikko{lifted to} a $\numberOfEvents\numberOfStates$-dimensional point process $\Nhybrid$. Given a realisation $\collection{t_n, e_n, x_n}{n=1,\ldots, N}$ of $\Nhybrid$ over a time horizon $[0, T]$, the likelihood \mikko{function} $\likelihood(\transitionProbabilities, \baseRates, \kernelsParams)$ \mikko{can be informally understood as} the probability that $\sum_{e\in\eventSpace, x\in\stateSpace}\NhybridOfEventState{e}{x}(T)=N$ and $\collection{T_n, E_n, X_n}{n=1,\ldots, N}$ \mikko{lies} in a small neighbourhood of  $\collection{t_n, e_n, x_n}{n=1,\ldots,N}$, under the assumption that $\collection{T_n, E_n, X_n}{n\in\integers}$ is generated by a state-dependent Hawkes processes with parameters $(\transitionProbabilities, \baseRates, \kernelsParams)$. \mikko{More rigorously, the} likelihood \mikko{function} is \mikko{the} density of the \mikko{\emph{Janossy measure}} with respect to the Lebesgue measure on $\realLine^{N}$ \citep[p.~125, 213]{daleyVereJonesVolume1}.

For \mikko{ordinary} Hawkes processes, \mikko{the likelihood function} $\likelihood(\baseRates, \kernelsParams)$ can be expressed directly in terms of $\intensityOfEvents$ \citep{daleyVereJonesVolume1} and \mikko{the maximum likelihood (ML) estimator} $(\hat \baseRates, \hat\kernelsParams)$ \mikko{is obtained by maximising} $\likelihood(\hat \baseRates, \hat\kernelsParams)$\mikko{, in practice numerically}. For state-dependent Hawkes processes, we are able to express $\likelihood(\transitionProbabilities, \baseRates, \kernelsParams)$ in terms of $\transitionProbabilities$ and $\intensityOfEvents$, and find that maximising the likelihood is \mikko{conveniently} achieved by solving two independent optimisation problems.
\newtheorem{likelihood_separability}[algorithm]{Theorem}
\begin{likelihood_separability} \label{thm:likelihood_separability}
	The \mikko{log likelihood function of an $\mathrm{sdHawkes}(\transitionProbabilities,\baseRates,\kernels^{\kernelsParams})$ process} is given by
	\begin{equation} \label{eq:log-likelihood}
		\ln \likelihood(\transitionProbabilities, \baseRates, \kernelsParams) = \sum_{n=1}^{N}\ln\transitionProbabilityIfEventFromAtoB{e_n}{x_{n-1}}{x_n} + \sum_{n=1}^n\ln\intensityOfEvent{e_n}(t_n) - \int_{0}^{T}\sum_{e\in\eventSpace}\intensityOfEvent{e}(t)dt.
	\end{equation}
	Furthermore, $(\hat\transitionProbabilities, \hat\baseRates, \hat\kernelsParams )\in\argmax_{\transitionProbabilities,\baseRates,\kernelsParams }\likelihood(\transitionProbabilities,\baseRates,\kernelsParams )$ if and only if
	\begin{numcases}{}
		\hatTransitionProbabilityIfEventFromAtoB{e}{x}{x'}=
		\frac{\sum_{n=1}^{N}\indicator(x_{n-1}=x, e_n=e, x_n=x')}{\sum_{n=1}^{N}\indicator(x_{n-1}=x, e_n=e)}, \quad e\in\eventSpace,\quad x,x'\in\stateSpace,\nonumber \\
		(\hat \baseRates, \hat \kernelsParams )\in \argmax_{\baseRates,\kernelsParams } \sum_{n=1}^{N}\ln \intensityOfEvent{e_n}(t_n) - \int_0^T\sum_{e\in\eventSpace}\intensityOfEvent{e}(t)dt.\label{eq:optimisation_problem}
	\end{numcases}
\end{likelihood_separability}

The upshot of Theorem \ref{thm:likelihood_separability} is that the ML estimation of state-dependent Hawkes processes is no harder than that of ordinary Hawkes processes.
\mikko{Namely,} $\transitionProbabilities$ is estimated \mikko{in a straightforward manner} by the empirical transition probabilities\mikko{,} whilst $(\baseRates,\kernelsParams)$ is estimated by maximising the \mikko{log} quasi-likelihood of $\N$, that is, the Radon-Nikodym derivative of a change of measure that transforms a standard Poisson process into $\N$ \citep[Theorem 10, p.~241]{bremaud:1981:MartingaleDynamics}\mikko{, which is similar to the log likelihood of a multivariate ordinary Hawkes process.} It is remarkable that, in spite of the strong coupling between the events and the state process, \mikko{the estimation of $\transitionProbabilities$ and $(\baseRates,\kernelsParams)$ is decoupled due to the separable form \ref{eq:log-likelihood} of the log likelihood function.}

\theoremstyle{definition}
\newtheorem{likelihood_formula}[algorithm]{Remark}
\begin{likelihood_formula} \label{rem:likelihood_formula}
\mikkoagain{It should be stressed that the separable form \eqref{eq:log-likelihood} of the log likelihood function $\likelihood$ is not a trivial consequence of} Definition \ref{def:state_dependent_hawkes}. It is once again the \mikko{lift} $\Nhybrid$ that allow us to transpose the problem to the setting of point process theory, where classical \mikko{results} \mikkoagain{apply} (see the proof of Theorem \ref{thm:likelihood_separability} in Appendix \ref{sec:proofs}).
\end{likelihood_formula}
\theoremstyle{plain}

We note that, in the case of ordinary Hawkes processes ($\numberOfStates=1$), consistency and asymptotic normality results for \mikko{the ML estimator} $(\hat \baseRates, \hat\kernelsParams)$ are available in the literature, see for example \citep{Ogata1978, Clinet2016}. However, these results rely on the stationarity and ergodicity of the underlying process and, unfortunately, \mikko{these properties need not extend to} state-dependent Hawkes processes \mikko{in general}. \mikkorev{If the excitation kernel $\kernels$ is dominated in all states by a non-state-dependent kernel that satisfies a classical stability condition \citep{Massoulie:1998aa:StabilityResults}, we conjecture that then stationarity and ergodicity hold. This can potentially be shown by adapting the arguments used in \citet{Mounjid:2019:AsymptoticProperties} to prove ergodicity for closely related point processes, or in the special case of exponential kernels (see Subsection \ref{subsec:exp_kernels}) using Markov process techniques, as outlined in Section 4.7 in \citet{Morariu:2019:thesis}. The case where $\kernels$ fails  to be dominated  by a stable kernel in some of the states appears however to be empirically more relevant given our results in Subsection \ref{sec:endo}. In this case, stationarity and ergodicity become more elusive, whilst they may still plausibly hold as long as the sojourns of the process in the ``unstable states'' do not dominate its time evolution. All in all, the analysis of the asymptotic properties of the ML estimator is unfortunately not straightforward and remains beyond the scope of the present paper.}

In Appendix \ref{subsec:test_estimation}\mikkoagain{,} we present Monte Carlo results that exemplify the \mikkoagain{favourable} finite-sample performance of the ML estimator. In particular, the results provide evidence that $(\hat\transitionProbabilities, \hat\baseRates, \hat\kernelsParams)$ is consistent as the length of the estimation window increases.

\subsection{\mikko{Goodness-of-fit diagnostics using residuals}}

\mikko{The goodness of fit of an estimated $\mathrm{sdHawkes}(\hat\transitionProbabilities,\hat\baseRates,\kernels^{\hat\kernelsParams})$ process can be assessed as follows.} Denote by \mikko{$\collection{T^e_n}{n=1,\ldots,n_e}$} the sequence of times at which an event of type $e$ occurred and by \mikko{$\collection{T^{ex}_n}{n=1,\ldots,n_{ex}}$} the sequence of times at which an event type $e$ occurred and after which the state was $x$, and set \mikko{$T^e_0=T^{ex}_0=0$} \mikko{for all} $e\in\eventSpace$ \mikko{and} $x\in\stateSpace$. \mikko{Introduce additionally} the event residuals $\residualOfEvent{e}{n}$ and total residuals $\residualOfEventState{e}{x}{n}$
\begin{align*}
	\residualOfEvent{e}{n} &:= \mikko{\int_{T^{e}_{n-1}}^{T^{e}_{n}}  \hatIntensityOfEvent{e}(t)dt,} \quad n=1,\ldots,\mikko{n_e}, \quad e\in\eventSpace, \\
	\residualOfEventState{e}{x}{n} &:= \mikko{\int_{T^{ex}_{n-1}}^{T^{ex}_{n}} \hatintensityOfEventState{e}{x}(t)dt,} \quad n=1,\ldots,\mikko{n_{ex}}, \quad e\in\eventSpace, \quad x\in\stateSpace\mikko{,}
\end{align*}
where \mikko{$\hat{\intensityOfEvents}$} and \mikko{$\hat{\intensityOfEventsStates}$} are computed by plugging \mikko{the estimated values $(\hat\transitionProbabilities,\hat\baseRates,\kernels^{\hat\kernelsParams})$} into \eqref{eq:event_intensity} and \eqref{eq:hybrid_intensity}, respectively.

It is a classical result that under the right \mikko{time change}, $\N$ and $\Nhybrid$ become \mikko{unit-rate} Poisson processes \citep{Meyer:1971}, with the event residuals and total residuals as their respective time increments\mikko{, which applies to the present setting as follows.}
\newtheorem{residuals}[algorithm]{Theorem}
\begin{residuals} \label{thm:residuals}
\begin{enumerate}[label=\textup{(\roman*)}]
\item \mikko{Suppose that $\collection{T^e_n}{n=1,\ldots,n_e}$ are generated by a $\numberOfEvents$-dimensional multivariate point process $\N$ with an $\internalHistoryCollection{\N, X}$-intensity $\intensityOfEvents$ satisfying \eqref{eq:event_intensity}, where $X$ is a given state process, $\baseRates=\hat\baseRates$ and $\kernels=\kernels^{\hat\kernelsParams}$. Then the event residuals $\collection{\residualOfEvent{e}{n} }{n=1,\ldots,\mikko{n_e}}$ for any $e \in \eventSpace$ are i.i.d.\ and follow the $\mathrm{Exp}(1)$ distribution.}
\item \mikko{Suppose that $\collection{T^{ex}_n}{n=1,\ldots,n_{ex}}$ are generated by an $\mathrm{sdHawkes}(\hat\transitionProbabilities,\hat\baseRates,\kernels^{\hat\kernelsParams})$ process $(\N, X)$. If moreover $\NhybridOfEventState{e}{x}(t)\rightarrow\infty$ as $t\rightarrow\infty$ with probability one for any $e\in\eventSpace$ and $x\in\stateSpace$, then the total residuals $\collection{\residualOfEventState{e}{x}{n}}{n=1,\ldots,\mikko{n_{ex}}}$ for any $e\in\eventSpace$ and $x\in\stateSpace$ are i.i.d.\ and follow the $\mathrm{Exp}(1)$ distribution.}
\end{enumerate}
\end{residuals}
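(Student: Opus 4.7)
The plan is to deduce both parts from the classical multivariate random time change theorem for point processes with stochastic intensities, see for instance Chapter II of \citet{bremaud:1981:MartingaleDynamics} or Theorem 7.4.IV of \citet{daleyVereJonesVolume1}. That theorem asserts that if $(M_1,\ldots,M_d)$ is a multivariate point process with $\aHistoryCollection$-intensity $(\mu_1,\ldots,\mu_d)$ whose compensators $\int_0^\cdot \mu_i(s)ds$ diverge to infinity almost surely, then the time-changed processes $\tilde M_i(t):=M_i(\inf\{u\geq 0: \int_0^u \mu_i(s)ds > t\})$ are mutually independent unit-rate Poisson processes. Because the residuals in both parts are, by construction, the inter-arrival times of these time-changed processes, and the inter-arrival times of a unit-rate Poisson process are i.i.d.\ $\mathrm{Exp}(1)$, the conclusion follows as soon as the divergence hypothesis is verified for each component.

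For part (i), I would first observe that the strict positivity of the components of $\hat\baseRates$, combined with the non-negativity of $\kernels^{\hat\kernelsParams}$ in \eqref{eq:event_intensity}, yields a uniform positive lower bound on $\hatIntensityOfEvent{e}$, so that $\int_0^t \hatIntensityOfEvent{e}(s)ds \to \infty$ almost surely. Applying the time change theorem to $\N$ with respect to $\internalHistoryCollection{\N, X}$ then identifies each $\residualOfEvent{e}{n}$ with the $n$-th inter-arrival time of the $e$-th time-changed component, giving the claim.

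For part (ii), I would first invoke Theorem \ref{thm:characterisation} to lift $(\N, X)$ to the $\numberOfEvents\numberOfStates$-dimensional point process $\Nhybrid$, which admits the $\internalHistoryCollection{\Nhybrid}$-intensity $\intensityOfEventsStates$ of \eqref{eq:hybrid_intensity}, and then apply the time change theorem to $\Nhybrid$. The divergence of the compensator of the $(e,x)$-th component follows from the hypothesis $\NhybridOfEventState{e}{x}(t)\to\infty$ via the classical fact that, for a point process with a well-defined intensity, the counting process and its compensator diverge simultaneously almost surely; the short argument being that, on $\{\int_0^\infty \hatintensityOfEventState{e}{x}(s)ds<\infty\}$, the compensated martingale $\NhybridOfEventState{e}{x}-\int_0^\cdot \hatintensityOfEventState{e}{x}(s)ds$ is $L^2$-bounded, hence converges a.s.\ to a finite limit, forcing $\NhybridOfEventState{e}{x}(\infty)<\infty$ and thus contradicting the hypothesis. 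The total residuals $\residualOfEventState{e}{x}{n}$ then emerge as the inter-arrival times of the corresponding time-changed unit-rate Poisson process, as required.

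The main technical obstacle I foresee is precisely this verification of compensator divergence, which is the only non-trivial precondition of the time change theorem in this setting. In part (i) it is automatic thanks to the strict positivity of the base rates, but in part (ii) it genuinely requires the extra hypothesis because the factor $\hatTransitionProbabilityIfEventFromAtoB{e}{x'}{x}$ appearing in \eqref{eq:hybrid_intensity} may vanish for certain triples $(e,x',x)$, precluding any uniform lower bound. The independence of residuals across distinct event types (respectively, across distinct event--state pairs) requires no separate effort, being built into the time change theorem via the standard fact that components of a multivariate point process with intensities never jump simultaneously.
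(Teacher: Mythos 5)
Your proposal is correct and follows essentially the same route as the paper: both rest on the multivariate random time change theorem (the paper cites Theorem 1 of \citet{Brown:1988aa}), verify the compensator-divergence condition for part (i) via the strict positivity of the base rates, and for part (ii) reduce it to the assumed divergence of $\NhybridOfEventState{e}{x}$ through the equivalence between divergence of a counting process and of its compensator, which the paper obtains by citing Lemma 17 of \citet{bremaud:1981:MartingaleDynamics} and you reprove with a short martingale argument.
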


Consequently, \mikko{the} goodness-of-fit \mikko{of the process} can be assessed by comparing the \mikko{empirical} distribution of \mikko{each sequence among $\collection{\residualOfEvent{e}{n} }{n=1,\ldots,\mikko{n_e}}$, $e\in \eventSpace$ and $\collection{\residualOfEventState{e}{x}{n}}{n=1,\ldots,\mikko{n_{ex}}}$, $e\in\eventSpace$, $x\in\stateSpace$ to the $\mathrm{Exp}(1)$ distribution,} \maxime{which can by achieved by inspecting a Q--Q plot (see Subsection \ref{subsec:goodness_of_fit}). Goodness-of-fit diagnostics can also check whether the residuals are mutually independent, which can be assessed by examining their correlogram. One can of course go beyond these visual assessments by applying formal statistical tests \citep{Bowsher2007}.}

\subsection{The special case of exponential kernels} \label{subsec:exp_kernels}

The direct computation of the term $\sum_{n=1}^{N} \ln \intensityOfEvent{e_n}(t_n)$ in the \mikko{log likelihood function} \eqref{eq:log-likelihood} involves a double sum \mikko{requiring} $\mathcal{O}(N^2)$ \mikko{operations to evaluate, which may render ML estimation numerically infeasible when the sample is large.} However, for \mikko{ordinary} Hawkes processes ($\numberOfStates=1$) with exponential kernels, it is known that this computation can be achieved in $\mathcal{O}(N \numberOfEvents^2)$ \mikko{operations} \citep{Ozaki1979a, Ogata:1981aa}. \mikko{Fortunately,} this property carries over to state-dependent Hawkes processes with \mikko{kernel} $\kernels$ \mikko{having exponential form}
\begin{equation} \label{eq:kernels_exp}
	\kernelAtoB{e'}{e}(t,x) = \impactCoeff{e'}{x}{e}\exp(-\decayCoeff{e'}{x}{e}t),\quad t>0, \quad e',e\in\eventSpace, \quad x\in\stateSpace,
\end{equation}
where the \mikko{\emph{impact coefficients}} $\impactCoeffs:=\collection{\impactCoeff{e'}{x}{e}}{}$ and \mikko{\emph{decay coefficients}} $\decayCoeffs:=\collection{\decayCoeff{e'}{x}{e}}{}$ are non-negative. \mikko{The reduction in computational cost becomes apparent from the derivation in Appendix \ref{subsec:formulae_liklihood}. In particular, \mikkoagain{the term $\sum_{n=1}^{N} \ln \intensityOfEvent{e_n}(t_n)$ can be expressed in terms of sums that satisfy a convenient recursive relationship,} reducing the computational cost} from $\mathcal{O}(N^2)$ to $\mathcal{O}(N)$ \mikko{operations}. A similar remark holds for the computation of the gradient and the residuals. Moreover, the $\mathcal{O}(N)$ complexity extends also to any kernel whose components are linear combinations of functions of the form \eqref{eq:kernels_exp}.

\section{Application to high-frequency \mikko{limit order book} data} \label{sec:application}

\subsection{Limit order book mechanism}

\mikko{We first briefly review the mechanics} of a \mikko{limit order book and recall} the definition\mikko{s} of some key market quantities\mikko{, following} \citet{gould2013surveylimit}.

\mikko{In} order-driven markets, market participants submit orders to buy or sell an asset (e.g., a stock) at the price of their choice. \mikko{Formally, an \emph{order}} is defined by \mikko{its \emph{submission time}, \emph{direction} (buy or sell), \emph{price} $p$ and \emph{size} $q$.} The quantities $p$ and $q$ must \mikko{typically} be multiples of the tick size and lot size, respectively, which are fixed by the exchange \mikko{or the market regulator}. When a buy order is submitted, if there are \mikko{unfilled} sell orders with prices $p' < p$, the buy order is matched at the smallest price $p'$ with the oldest order(s)\mikko{, following the \emph{price--time priority} rule}. \mikko{Otherwise} the order becomes \mikko{\emph{active}}, that is, it enters the queue of \mikko{unfilled} orders with price $p$. Such orders are called \textit{limit orders}. The collection of \mikko{current} limit orders is called the \textit{limit order book} \mikko{(LOB)} and can thus be understood as a snapshot of the expressed supply and demand \mikko{or \emph{visible liquidity}}. Orders that result in an instant match with pre-existing limit orders are called \textit{market orders}. \mikko{Orders may also be \emph{cancelled}, that is, an unfilled or partially filled order is withdrawn from the LOB.}

The highest (respectively lowest) price among buy (respectively sell) active limit orders is called the \textit{bid} (respectively \textit{ask}) price. For instance, the bid price is the best price at which one can instantly sell by sending a market order. The difference between the ask price and bid price is called the \mikko{\emph{bid--ask spread}}. The orders and cancellations with prices that \mikko{are equal to or more aggressive than} the \mikko{bid and ask} prices at their submission time form the \textit{level-I order flow.}

\subsection{Data}

\begin{figure}[!t]
\begin{subfigure}[t]{.5\textwidth}
  \centering
  \includegraphics[width=.95\linewidth]{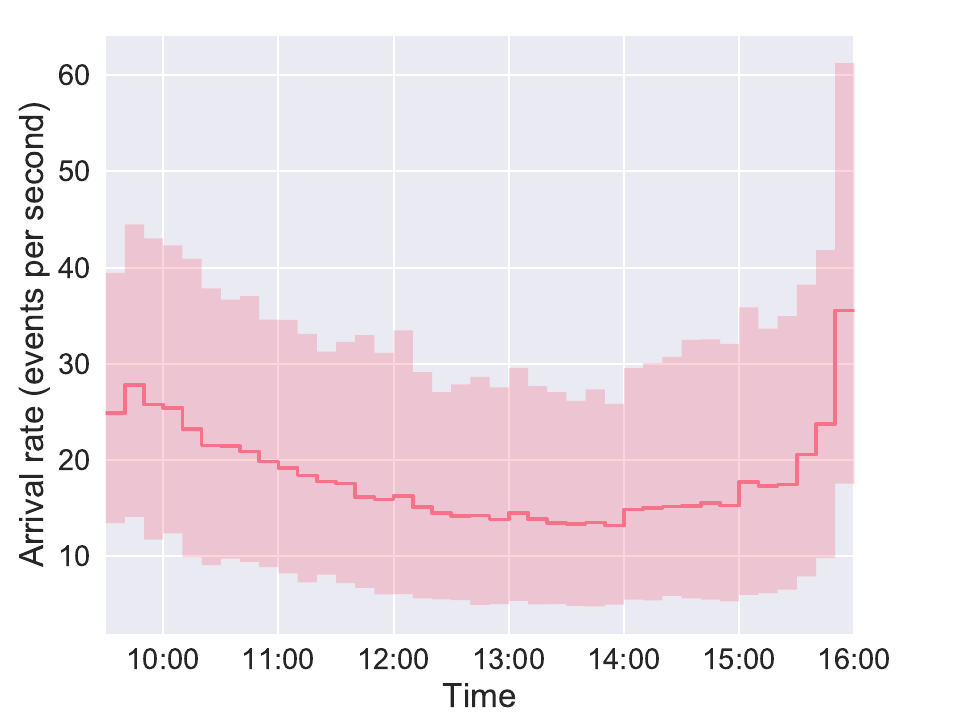}
  \caption{Arrival rate of level-I orders.}
  \label{fig:u_shape}
\end{subfigure}
\begin{subfigure}[t]{.5\textwidth}
  \centering
  \includegraphics[width=.95\linewidth]{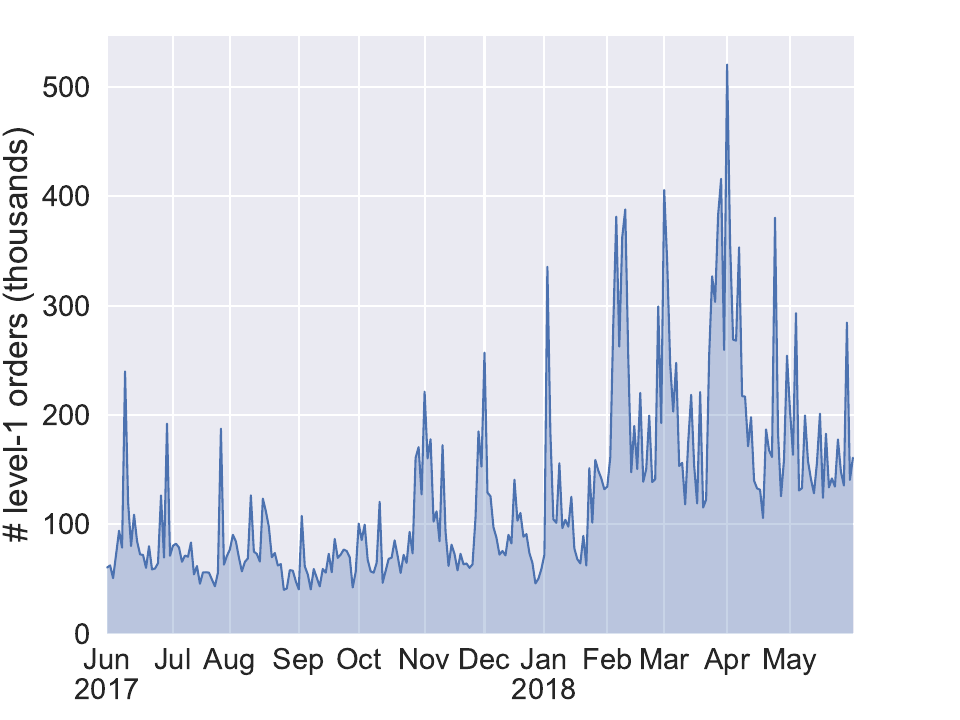}
  \caption{Number of level-I orders.}
  \label{fig:n_orders}
\end{subfigure}
\\
\begin{subfigure}[t]{.5\textwidth}
  \centering
  \includegraphics[width=.95\linewidth]{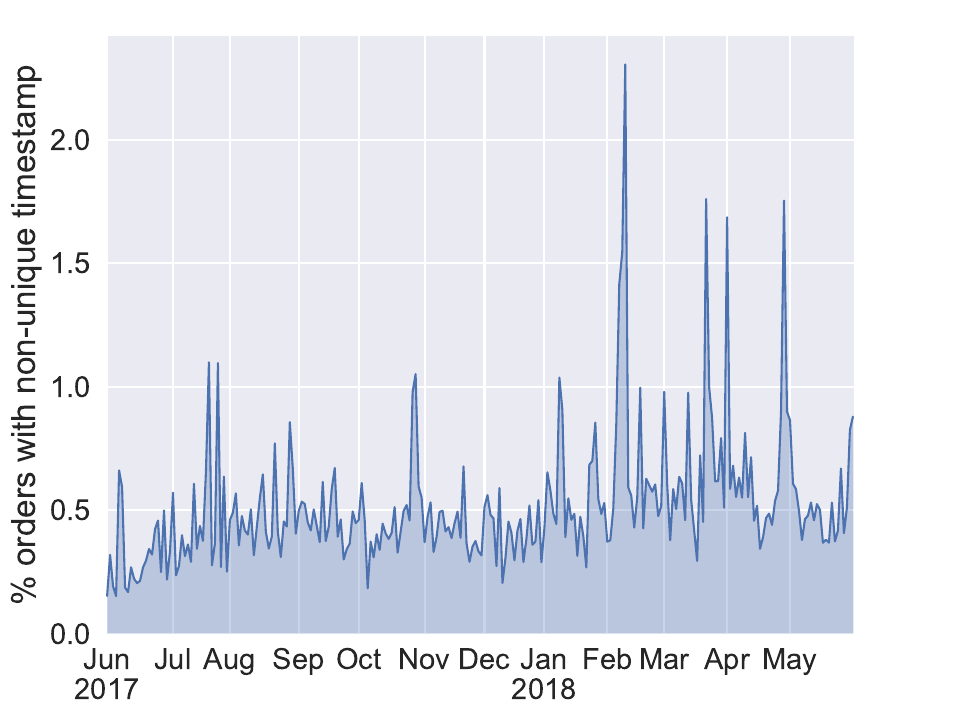}
  \caption{Fraction of level-I orders with non-unique timestamp.}
  \label{fig:same_time}
\end{subfigure}
\begin{subfigure}[t]{.5\textwidth}
  \centering
  \includegraphics[width=.95\linewidth]{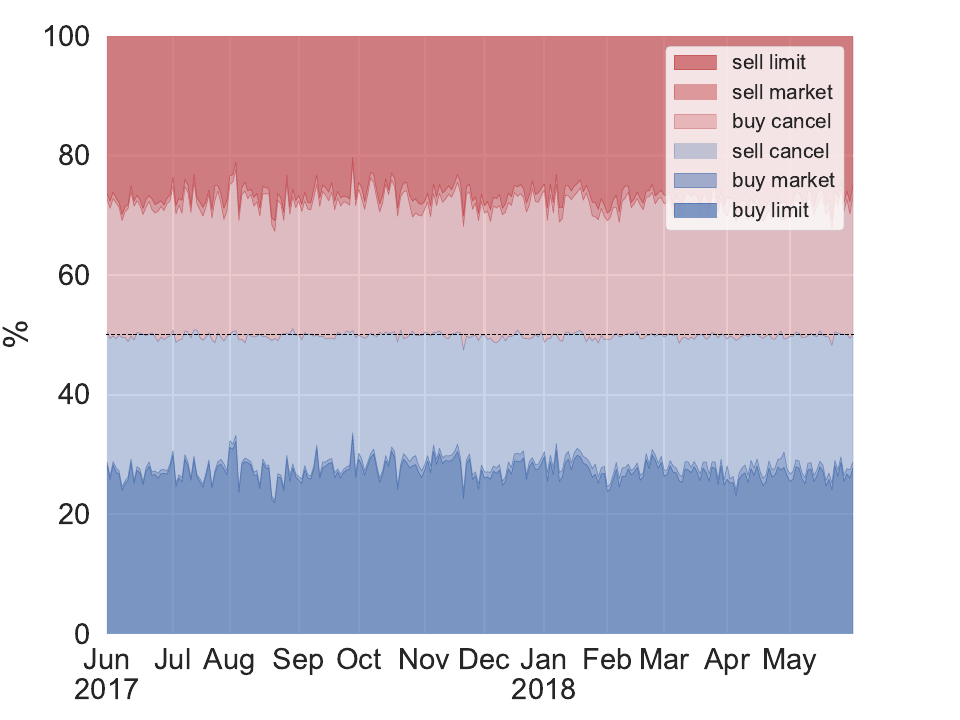}
  \caption{Distribution of level-I order types.}
  \label{fig:order_distribution}
\end{subfigure}
\caption{Descriptive statistics of level-I order flow \mikko{of INTC}. Except for Figure \ref{fig:u_shape}, only the data \mikko{between 12:00 and 14:30 are} used. In Figure \ref{fig:u_shape}, the sample mean of the arrival rate of level-I orders is computed over 10-minute bins. The translucent area represents the range of the arrival rate across the 250 trading days, excluding the bottom and top 5\% values.}
\label{fig:descriptive_stats}
\end{figure}

\mikko{We analyse tick-by-tick level-I LOB data on the stock of Intel Corporation (INTC), traded on the Nasdaq Stock Market, from 1 June 2017 to 31 May 2018. We discard the data on 3 June 2017 (the day before Independence Day) and 24 November 2017 (Black Friday, i.e., the day after Thanksgiving) because of early market close on those days at 13:00, leaving 250 full trading days in the data set. Besides INTC, we have also studied in the same manner the stocks of Advanced Micro Devices Inc.\ (AMD), Micron Technology Inc.\ (MU), Snap Inc.\ (SNAP) and Twitter Inc.\ (TWTR) from January to April 2018, but due to space constraints we only report our results on INTC as they are largely representative of the findings on the other stocks. (Full results are available from the authors upon request.)}

Whilst the primary listing of INTC is the Nasdaq Stock Market, it can be traded on several stock exchanges and \emph{alternative trading systems} (ATS) across the increasingly fragmented US equity market system \citep{O_Hara_2011}. In particular, there is no single LOB that would in real time aggregate the available liquidity on all trading venues --- the Nasdaq LOB of INTC represents only a part of the visible liquidity. The LOBs of different exchanges cannot, however, diverge significantly, at least in terms of prices, due to the arbitrage opportunities that would ensue and the \emph{National Best Bid and Offer} (NBBO) rule that mandates brokers to execute trades at the best available prices in the market system. Moreover, we deem the Nasdaq LOB of INTC to be representative of the state of the market since it had the largest market share of INTC among all exchanges during the observation period \mikkoagain{ --- according to Nasdaq trading volume statistics (\url{https://www.nasdaqtrader.com/trader.aspx?ID=marketsharedaily}), around 30\% of the total INTC market volume was traded on Nasdaq, which is a typical figure for a stock with Nasdaq as primary listing.}


\mikko{In the US equity markets, the tick size is fixed to \$0.01 by Rule 612 of \emph{Regulation National Market System} (Reg NMS), with the exception of stocks priced below \$1.00 per share. This means that for a stock with a low price per share, the uniform tick size is relatively large compared to the price. Such stocks are dubbed \emph{large-tick stocks}. Whilst the exact criterion is subjective, the price of INTC during the sample period was until January 2018 below the threshold of \$50 per share used by \citet{Bonart2017} to distinguish large-tick stocks and below \$60 during the entire period.}
 A characteristic \mikko{feature} of large-tick stocks is that \mikko{their bid--ask} spread is most of the time equal to one tick, which \mikko{is confirmed for INTC in} Figure \ref{fig:events_spread_distribution}. \mikko{Another important feature of large-tick stocks is that most of their liquidity and trading activity is concentrated on the bid and ask levels, which is also our rationale for focusing on level-I data on INTC and eschewing deeper levels of the LOB.}

\mikko{In the modelling part of our analysis, we} focus \mikko{exclusively on trading activity} \mikko{between 12:00 and 14:30 for the following reason}. \mikko{It is well-known, and also exemplified in Figure \ref{fig:u_shape}, that the intensity of trading} activity is not constant throughout the day but \mikko{follows on average} a U-shape\mikko{d curve}. \mikko{Because of this diurnal pattern,} \mikko{imposing a} constant base rate \mikko{vector} $\baseRates$ over the entire trading day \mikko{might} result in overestimation of the self- and cross-excitation \mikko{effects} \citep{Rambaldi2015, Omi2017}. \mikko{The intensity of trading activity is relatively constant over the chosen intraday period, whilst the choice still leaves an ample amount of observations for model estimation} (at least 50,000 level-I orders \mikko{on} each day, see Figure \ref{fig:n_orders}).

\mikko{The data was supplied by \mikko{LOBSTER} (\url{https://lobsterdata.com}) in a form where the LOB at the time of each event has already been reconstructed. The timestamp of each event is} recorded with nanosecond precision. As a result, more than 99\% of level-I orders have a unique timestamp (Figure \ref{fig:same_time}). This contrasts with \mikko{some earlier} studies \citep{Bowsher2007, Large:2007aa:MeasureResiliency} \mikko{where} lower timestamp resolutions (e.g., one second) \mikko{were used, leading to a significant amount of \emph{tied} timestamps, shared by multiple events.} As Hawkes processes capture the \mikko{\emph{Granger causality}} between different event types \citep{Eichler2017,EmbrechtsandKirchner2018}, being able to accurately \mikko{establish the order of events}, even at the \mikko{shortest} timescales, is \mikko{essential for accurate estimation of Hawkes processes}.
\maxime{It is worth pointing out that in LOBSTER data, a market order that is matched with, say, $n$ multiple limit orders is recorded as $n$ individual market orders sharing the same timestamp. However, in our analysis, including Figure \ref{fig:same_time}, we address this artefact by aggregating market orders with a tied timestamp and counting them as a single level-I event.}

\subsection{Model specification}

\begin{table}[!t]
\centering
\mikko{\begin{tabular}{lcc}
\toprule
                    & $\mathrm{Model}_{\mathrm{S}}$                                                       & $\mathrm{Model}_{\mathrm{QI}}$                                                                \\ \midrule
Event types        & \multicolumn{2}{c}{\event{ask} and \event{bid} events, $\numberOfEvents=2$, $\eventSpace=\{\mbox{\event{ask}, \event{bid}}\}$}                                           \\
State process       & \begin{tabular}[t]{@{}c@{}}Bid--ask spread, $\numberOfStates=2$\\ $\stateSpace=\{\mbox{\state{1}}, \mbox{\state{2+}}\}$\end{tabular} & \begin{tabular}[t]{@{}c@{}}Queue imbalance, $\numberOfStates=5$\\ $\stateSpace = \{ \mbox{\state{sell++}}, \mbox{\state{sell+}}, \mbox{\state{neutral}}, \mbox{\state{buy+}}, \mbox{\state{buy++}}\}$\end{tabular} \\
Kernel $\kernels$             & \multicolumn{2}{c}{Exponential, of the form \eqref{eq:kernels_exp}}                                                                                                        \\
No.\ of parameters & 26                                                            & 92                                                                     \\
\bottomrule
\end{tabular}}
\caption{Summary of \mikko{$\mathrm{Model}_{\mathrm{S}}$ and $\mathrm{Model}_{\mathrm{QI}}$}.}
\label{table:models}
\end{table}

\mikko{We work with models, based on state-dependent Hawkes processes, that distinguish t}wo event types, \mikko{denoted} \event{ask} and \event{bid} (\mikko{that is,} $\eventSpace=\{\mbox{\event{ask}, \event{bid}}\}$, $\numberOfEvents=2$). \mikko{In our analysis, \event{bid} events consist of} buy market orders, level-I buy limit orders and level-I sell cancellations\mikko{, whilst \event{ask} events inversely consist of sell market orders, level-I sell limit orders and level-I buy cancellations.} Consequently, $\NofEvent{\textsf{\event{bid}}} - \NofEvent{\textsf{\event{ask}}}$ can be interpreted as a proxy of the order flow imbalance, which was shown by \citet{cont2013price} to be the main driver of price changes. \mikko{More concretely, \event{bid} events tend to push the price up and \event{ask} events down.} From Figure \ref{fig:order_distribution}, we see that the distribution between these two event types is very \mikko{balanced}, with market orders accounting for less than 5\% of the level-I activity. \mikko{As discussed above,} only the level-I events are modelled \mikko{and} events occurring \mikko{at} deeper \mikko{levels of} the \mikko{LOB} discarded.
One could of course increase $\numberOfEvents$ to \mikko{have a} more granular \mikko{classification of} event types. Still, the focus of this paper is on \mikko{the} modelling \mikko{of state dependence} and the choice $\eventSpace=\{\mbox{\state{ask}, \state{bid}}\}$ \mikko{already leads to} interesting results whilst keeping the dimensionality low, which makes the results \mikko{easier to visualise}.

\mikko{As the state variable} we consider the \mikko{bid--ask} spread and the queue imbalance\mikko{, giving rise to two models dubbed $\mathrm{Model}_{\mathrm{S}}$ and $\mathrm{Model}_{\mathrm{QI}}$}, respectively. In \mikko{$\mathrm{Model}_{\mathrm{S}}$}, we set $\stateSpace=\{\mbox{\state{1}}, \mbox{\state{2+}}\}$ ($\numberOfStates=2$), \mikko{where the states correspond to the bid--ask spread being one tick ($X(t)=\mbox{\state{1}}$) and two ticks or more ($X(t)=\mbox{\state{2+}}$). Increasing the number of states beyond $\numberOfStates=2$ in this setting would not be} practically relevant since the \mikko{bid--ask} spread is very rarely strictly \mikko{wider} than two ticks. The queue imbalance\mikko{, used in $\mathrm{Model}_{\mathrm{QI}}$,} is \mikko{nowadays} recognised as a popular trading signal \mikko{with predictive power on} the direction of the next price move \citep{Cartea:Donelly:2018}. Denoting the total size of limit orders sitting at the ask price by \mikko{$Q_{\mathrm{ask}}(t)$, and defining $Q_{\mathrm{bid}}(t)$ analogously,} the queue imbalance \mikko{can be expressed} as
\begin{equation*}
	\mathrm{QI}(t):=\frac{Q_{\mathrm{bid}}(t)-Q_{\mathrm{ask}}(t)}{Q_{\mathrm{bid}}(t)+Q_{\mathrm{ask}}(t)} \in [-1, 1].
\end{equation*}
\mikkoagain{(The denominator $Q_{\mathrm{bid}}(t)+Q_{\mathrm{ask}}(t)$ equals zero if and only if the LOB is empty at time $t$. In this case, it would be natural to define $\mathrm{QI}(t):=0$, but this never occurs in our data set.)}
For example, \mikko{the condition} $\mathrm{QI}(t)>0$ \mikko{signals} buy pressure and tends to be followed by an upwards price move. As in \citet{Cartea:Donelly:2018}, we split the interval $[-1, 1]$ into $\numberOfStates=5$ bins of equal width which we label as follows:
\begin{equation}\label{eq:bins}
	\stateSpace = \{ \underbrace{\mbox{\state{sell++}}}_{[-1, -0.6)}, \underbrace{\mbox{\state{sell+}}}_{[-0.6, -0.2)}, \underbrace{\vphantom{\mbox{+}}\mbox{\state{neutral}}}_{[-0.2, 0.2)}, \underbrace{\mbox{\state{buy+}}}_{[0.2, 0.6)}, \underbrace{\mbox{\state{buy++}}}_{[0.6, 1]} \}\mikko{,}
\end{equation}
\mikko{whereby in $\mathrm{Model}_{\mathrm{QI}}$ the state variable $X(t)$ indicates bin where $\mathrm{QI}(t)$ is located.}

\mikko{Finally, given the large number of observations we are dealing with (see Figure \ref{fig:n_orders}), we use the exponential specification \eqref{eq:kernels_exp} of the kernel $\kernels$ in both models, as it leads to a significant reduction of computational cost in estimation, as discussed in Subsection \ref{subsec:exp_kernels}. The full specifications of the models are summarised in Table~\ref{table:models}.}


\subsection{\mikko{Visualising estimated excitation effects}}

\mikko{We use the following approach to present our estimation results on self- and cross-excitation effects.} For each triple $(e,e',x)\in\eventSpace^2 \times \stateSpace$, \mikko{ML} estimation \mikko{produces an estimated excitation profile $t\mapsto\kernelAtoB{e'}{e}(t, x)$, parameterised by the estimated} impact coefficient $\impactCoeffHat{e'}{x}{e}$ and decay coefficient $\decayCoeffHat{e'}{x}{e}$. However, \mikko{instead of reporting} $\impactCoeffHat{e'}{x}{e}$ and $\decayCoeffHat{e'}{x}{e}$, \mikko{we visualise the excitation profile by plotting the truncated $L^1$-norm}
\begin{equation*}
	t\mapsto \|\kernelAtoBHat{e'}{e}(\cdot, x)\|_{1,t}:=  \int_0^t\kernelAtoBHat{e'}{e}(s,x)ds,
\end{equation*}
\mikko{from which the magnitude and the effective timescale of the excitation effect is easier to gauge than from the numerical values of $\impactCoeffHat{e'}{x}{e}$ and $\decayCoeffHat{e'}{x}{e}$.} \mikko{In fact, taking guidance} from the cluster representation of \mikko{ordinary} Hawkes processes \citep{Hawkes:1974aaClusterRepresentation}, \mikko{we can conveniently} interpret $\|\kernelAtoBHat{e'}{e}(\cdot, x)\|_{1,t}$ as the average number of events of type $e$ that have been directly triggered by an event of type $e'$ \mikko{in state $x$} within $t$ seconds of its occurrence. 
\mikko{Further, we note} that the \mikko{full $L^1$-}norm is given by 
\begin{equation*}
\mikko{\|\kernelAtoBHat{e'}{e}(\cdot, x)\|_{1,\infty} = \lim_{t\rightarrow \infty} \|\kernelAtoBHat{e'}{e}(\cdot, x)\|_{1,t} = \frac{\impactCoeffHat{e'}{x}{e}}{\decayCoeffHat{e'}{x}{e}}.}
\end{equation*}

\subsection{\mikko{Estimation results for $\mathrm{Model}_{\mathrm{S}}$}} \label{subsec:results_1}
 
 \begin{figure}[!t]
\begin{subfigure}[t]{.4\textwidth}
  \centering
  \includegraphics[width=.48\linewidth]{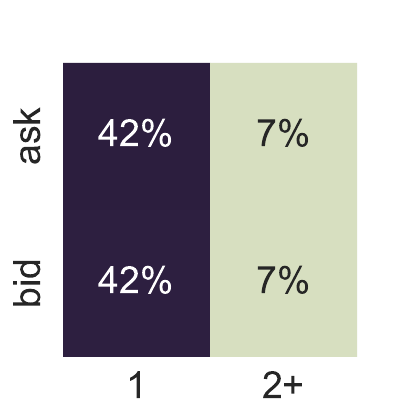}
  \caption{$\mathrm{Model}_{\mathrm{S}}$ (state process: bid--ask spread).}
  \label{fig:events_spread_distribution}
\end{subfigure}
\begin{subfigure}[t]{.6\textwidth}
  \centering
  \includegraphics[width=.8\linewidth]{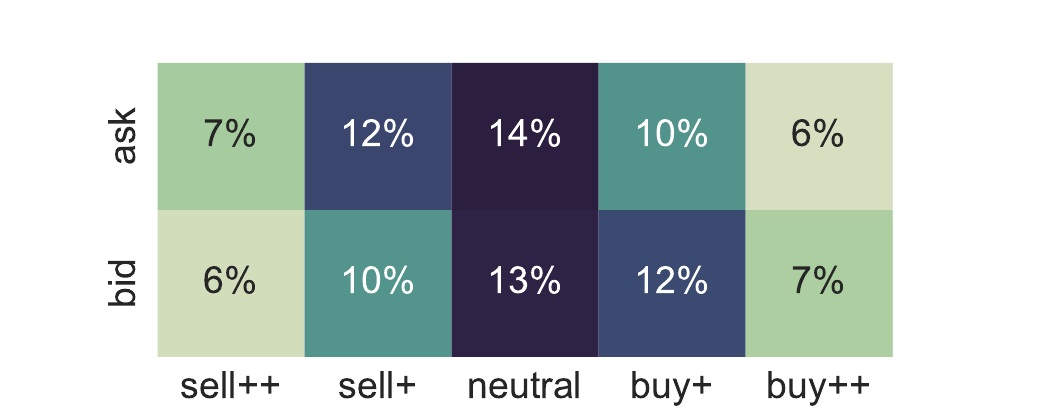}
  \caption{$\mathrm{Model}_{\mathrm{QI}}$ (state process: queue imbalance).}
  \label{fig:events_imbalance_distribution}
\end{subfigure}
\caption{Joint distribution of events and states \mikko{for INTC, depicting the empirical} distribution of \mikko{the marks} $(E_n, X_n)$ for the two considered state processes.}
\label{fig:events_states_distribution}
\end{figure}

\mikko{We estimate} the model parameters $(\hat{\transitionProbabilities}^{(i)}, \hat\baseRates^{(i)}, \hat\impactCoeffs^{(i)}, \hat\decayCoeffs^{(i)})$ \mikko{for each} trading day $i$ \mikko{in the sample by ML estimation,} as explained in Section \ref{sec:MLE}. Practical details \mikko{on the numerical solution of the} underlying optimisation problem can be found in the \mikko{Appendix \ref{app:numerical_opt}.}

\mikko{The estimated transition distribution $\hat{\transitionProbabilities}$ of $\mathrm{Model}_{\mathrm{S}}$, obtained by averaging over the daily estimates $\hat{\transitionProbabilities}^{(i)}$, is presented in Figure \ref{fig:probabilities_spread}.} The state process \mikko{describing the bid--ask spread} exhibits \mikko{persistent} behaviour, in the sense that the probability of remaining in the \mikko{current} state is very high. We also observe \mikko{higher likelihood} of moving from state \state{2+} to \state{1} than \mikko{vice versa, which is consistent with one-tick bid--ask spread being the equilibrium state for a large-tick stock like INTC}. \mikko{We also find that the} transition probabilities are not sensitive to the event type\mikko{, which is natural since the bid--ask spread not is expected to be influenced by the direction of orders, ceteris paribus.}

\mikko{The estimation results on the excitation kernel, given in Figure \ref{fig:kernels_spread}, indicate that self-excitation effects surpass cross-excitation effects in both states. Their magnitude and timescales, however, vary between the two states. The effective timescales of these effects range from 0.1 to 100 milliseconds, which is in agreement with the predominantly algorithmic origin and multiscale nature of trading in modern electronic markets.}

\begin{figure}[!t]
\begin{subfigure}{1\textwidth}
  \centering
  \includegraphics[width=.4\linewidth]{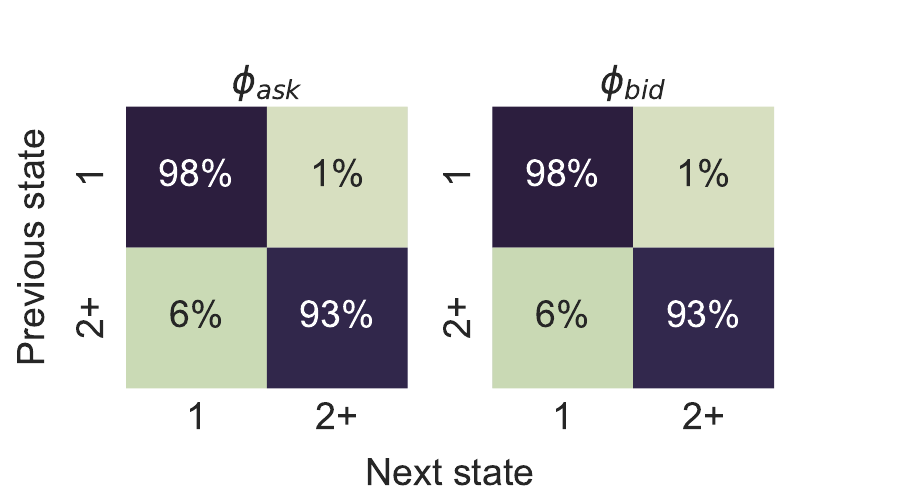}
  \caption{Transition probabilities of the bid--ask spread ($\mathrm{Model}_{\mathrm{S}}$).}
  \label{fig:probabilities_spread}
\end{subfigure}%
\\
\begin{subfigure}{1\textwidth}
  \centering
  \includegraphics[width=.8\linewidth]{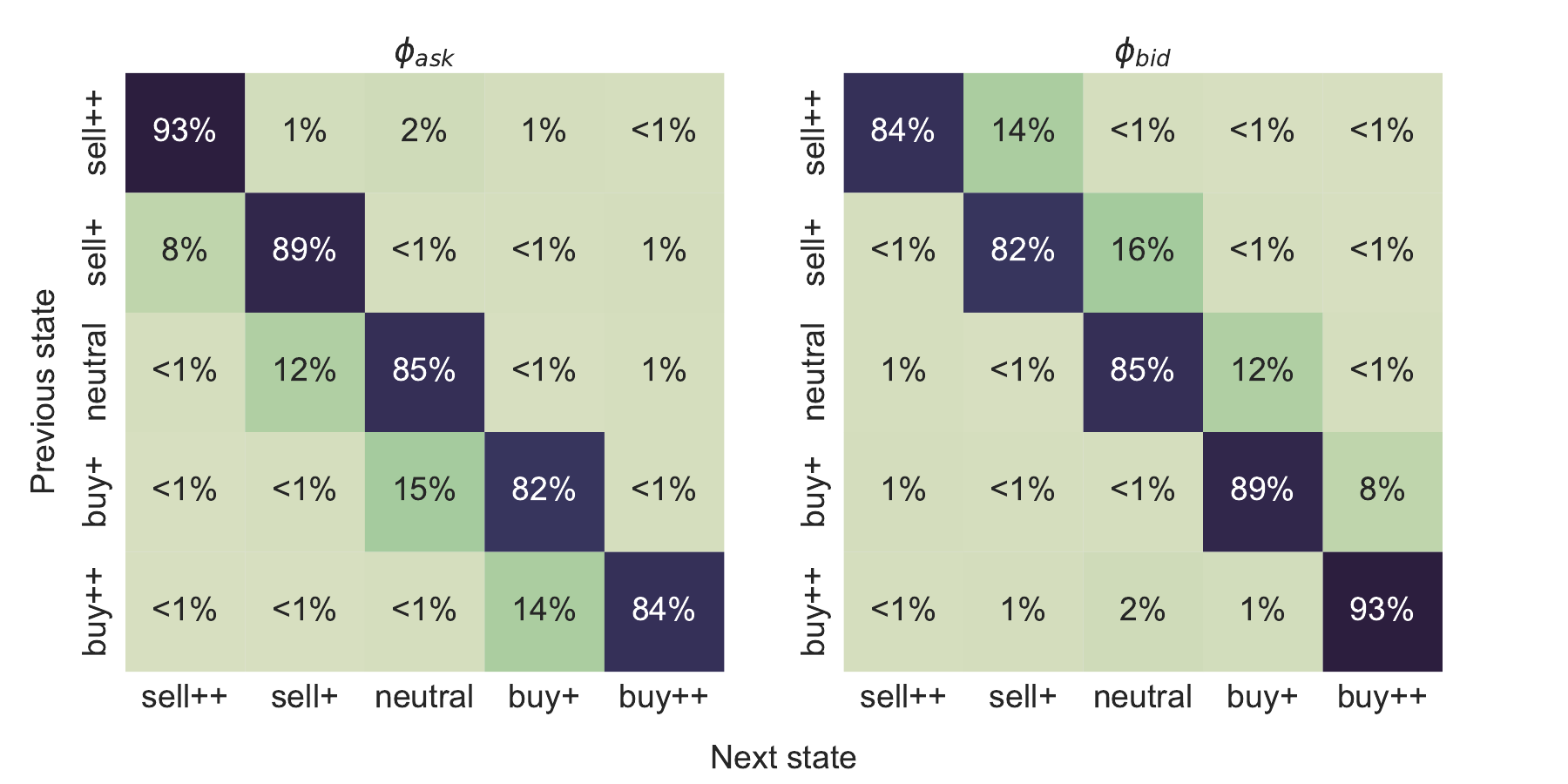}
  \caption{Transition probabilities of the queue imbalance ($\mathrm{Model}_{\mathrm{QI}}$).}
  \label{fig:probabilities_imbalance}
\end{subfigure}
\caption{Estimated transition distributions $\hat \transitionProbabilities$ of $\mathrm{Model}_{\mathrm{S}}$ and $\mathrm{Model}_{\mathrm{QI}}$. We report the average of $\hat\transitionProbabilities^{(i)}$ across the 250 trading days. \mikko{(Daily estimates vary little from these averaged values.)}}
\label{fig:transition_probabilities}
\end{figure}

\begin{figure}[p]
\begin{subfigure}{1\textwidth}
  \centering
  \includegraphics[width=.78\linewidth]{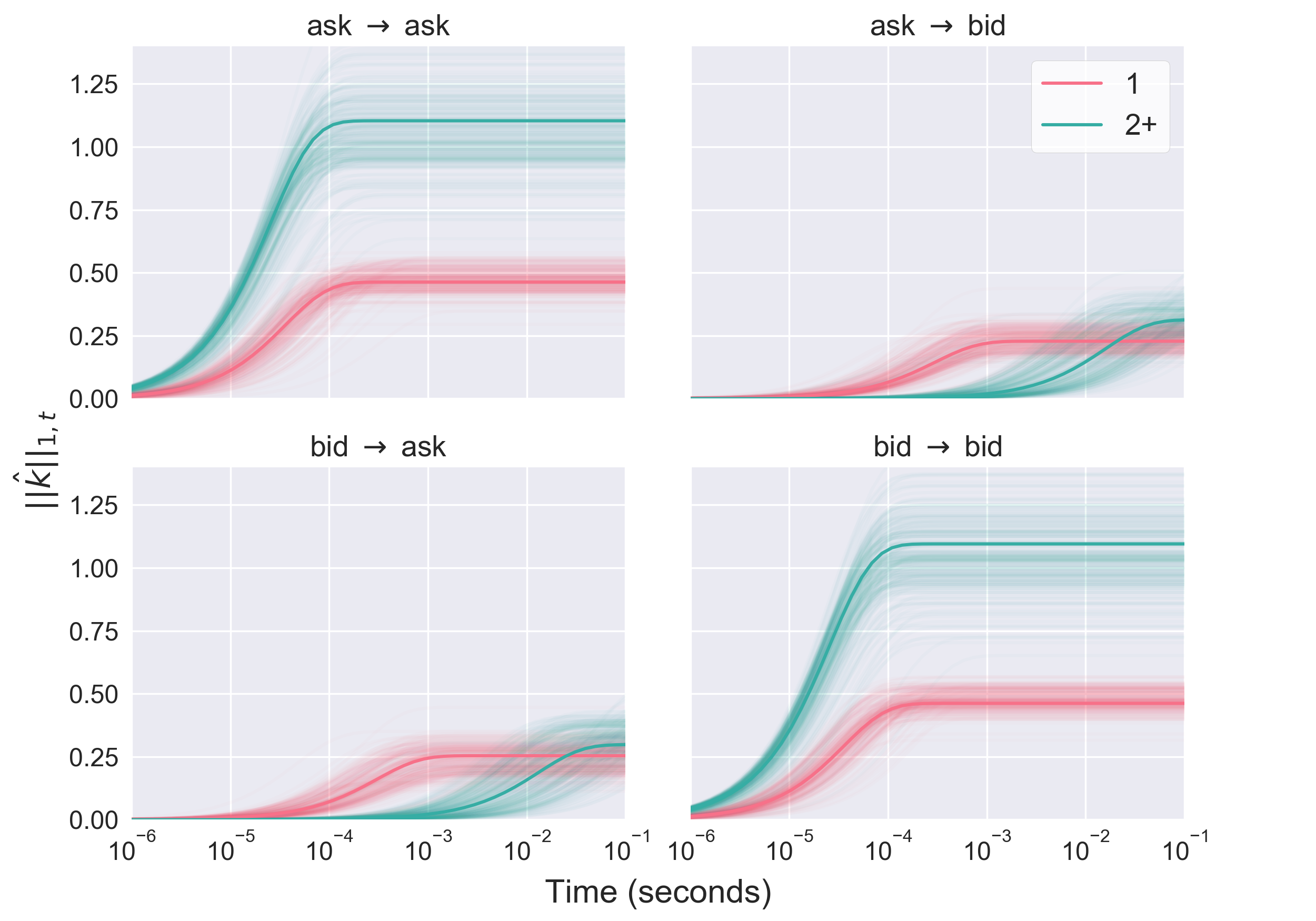}
  \caption{$\mathrm{Model}_{\mathrm{S}}$ (state variable: bid--ask spread).}
  \label{fig:kernels_spread}
\end{subfigure}%
\\
\begin{subfigure}{1\textwidth}
  \centering
  \includegraphics[width=.78\linewidth]{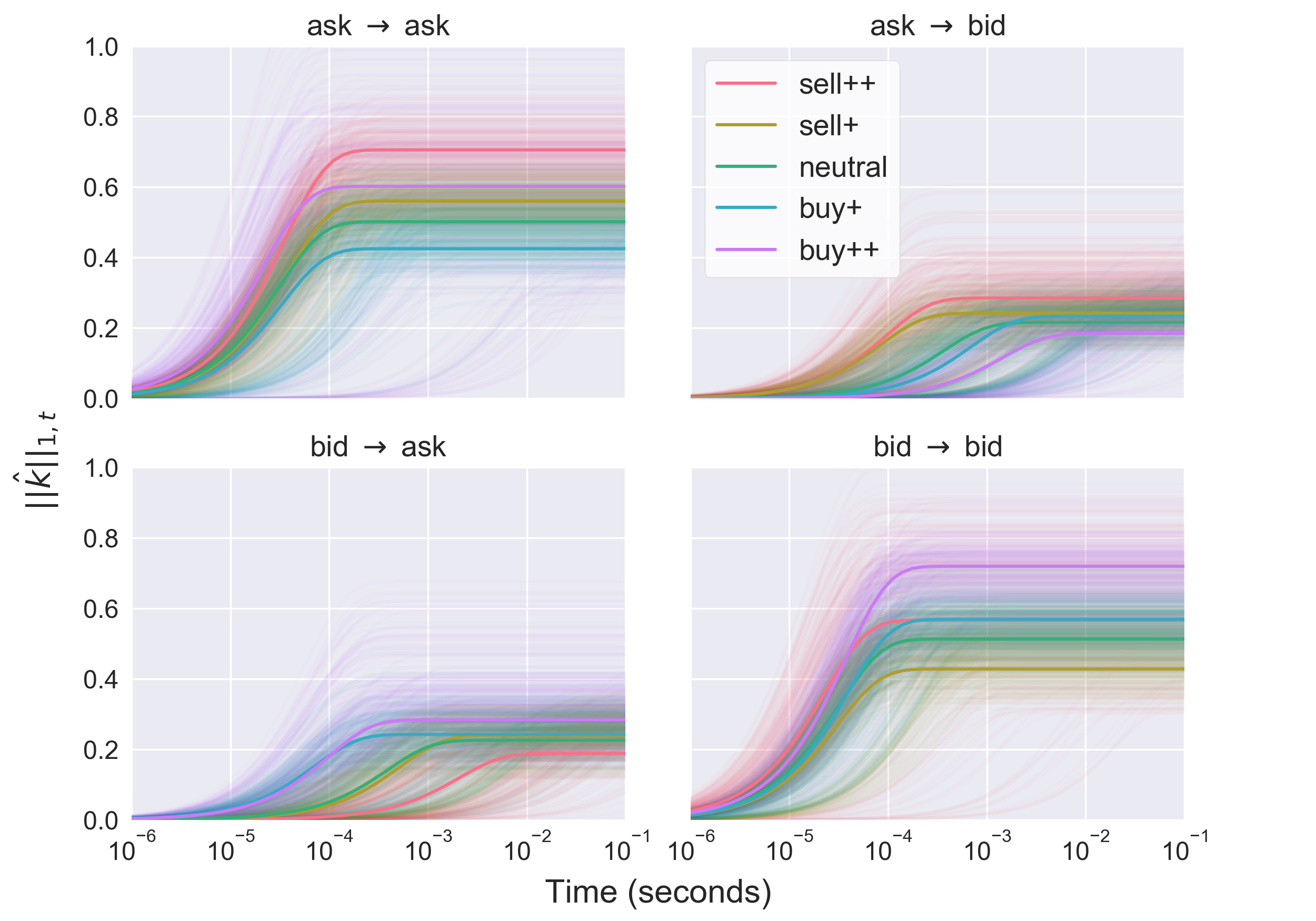}
  \caption{$\mathrm{Model}_{\mathrm{QI}}$ (state variable: queue imbalance).}
  \label{fig:kernels_imbalance}
\end{subfigure}
\caption{\mikko{The estimated kernel $\hat \kernels$ under $\mathrm{Model}_{\mathrm{S}}$ and $\mathrm{Model}_{\mathrm{QI}}$.} Each \mikko{panel} describes \mikko{self-} or cross-excitation as indicated by \mikko{its title}, whilst each colour corresponds to a different state. For example, in Figure \ref{fig:kernels_spread}, the red curves in the second \mikko{panel} represent the estimates $\kernelAtoBHat{e'}{e}^{(i)}(\cdot, x)$ where $e'=\mbox{\event{ask}}$, $e=\mbox{\event{bid}}$ and $x=\mbox{\state{1}}$. All \mikko{daily} estimates are superposed with one translucent curve for each day. \mikko{An ``aggregate''} kernel \mikko{is} represented by a solid line\mikko{, computed} using the median of $\hat\impactCoeffs^{(i)}$ and $\hat\decayCoeffs^{(i)}$ across the 250 trading days.}
\label{fig:kernels}
\end{figure}

When the \mikko{bid--ask} spread increases \mikko{to \state{2+}}, the magnitude of the self-excitation effects doubles whilst their \mikko{timescale remains roughly the same}. \mikko{The timescale of cross-excitation, however, lengthens drastically, whilst their magnitude increases slightly. A plausible microstructural explanation for this pattern goes as follows. When the bid--ask spread is in state \state{2+}, a trader can submit an aggressive limit order inside the spread, gaining queue priority at the cost of} a less favourable price. A \event{bid} event can then be seen as a signal for an upwards price move, which may prompt limit orders from buyers seeking a favourable position in the new queue and cancellations of limit orders from sellers trying to avoid adverse selection. \mikko{Traders who submit} sell orders are, per contra, incentivised to wait and see if the \mikko{expected} price \mikko{increase actually materialises}.

\begin{figure}[!t]
\begin{subfigure}[t]{.5\textwidth}
  \centering
  \includegraphics[width=.95\linewidth]{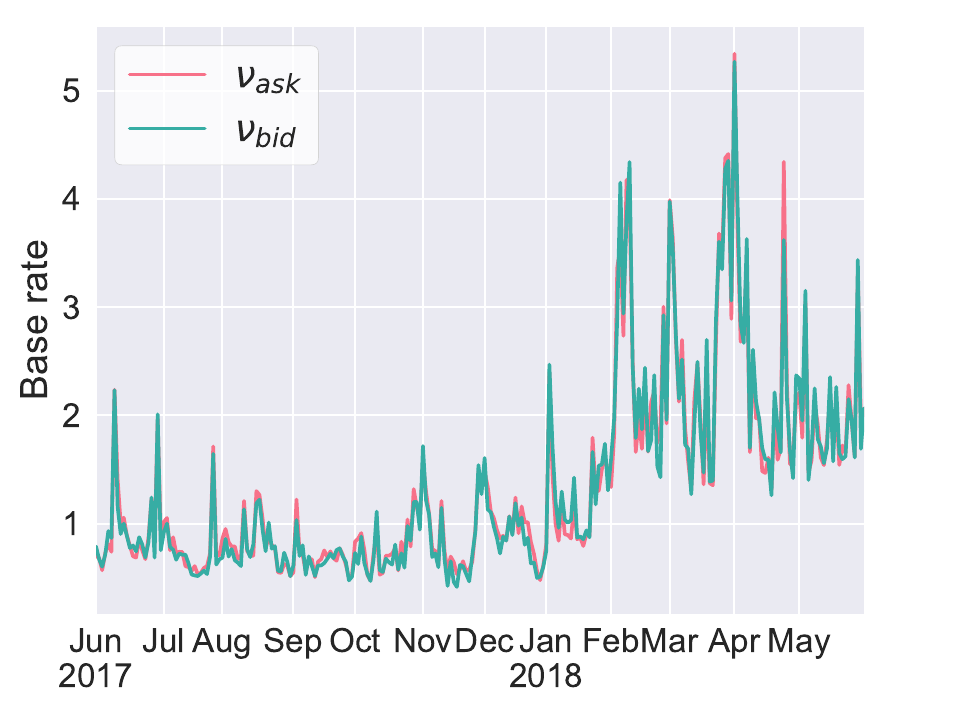}
  \caption{$\mathrm{Model}_{\mathrm{S}}$.}
  \label{fig:base_rates_S}
\end{subfigure}
\begin{subfigure}[t]{.5\textwidth}
  \centering
  \includegraphics[width=.95\linewidth]{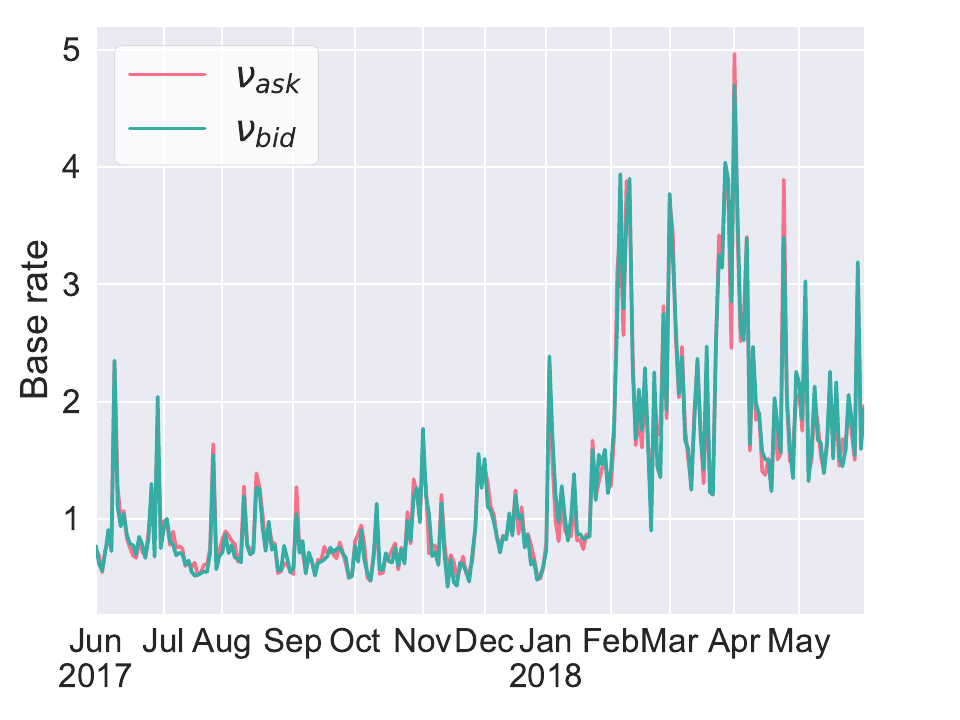}
  \caption{$\mathrm{Model}_{\mathrm{QI}}$.}
  \label{fig:base_rates_QI}
\end{subfigure}
\caption{\maxime{Estimated base rate vector $\hat\baseRates^{(i)}$ for $\mathrm{Model}_{\mathrm{S}}$ and $\mathrm{Model}_{\mathrm{QI}}$ over time (in number of events per second).}}
\label{fig:base_rates}
\end{figure}

\maxime{The evolution of the base rate vector $\hat\baseRates^{(i)}$ throughout the 250 days of data is displayed in Figure \ref{fig:base_rates}. We find a remarkable balance between buyers and sellers (i.e., $\baseRate{\mbox{\state{bid}}}\approx \baseRate{\mbox{\state{ask}}}$). We  also notice that the evolution of $\hat\baseRates^{(i)}$ mimics that of the total number of orders (Figure \ref{fig:n_orders}), which suggests that the day-to-day variation in market activity is mainly due to exogenous factors (cf.\ the analysis of endogeneity in Subsection \ref{sec:endo}).}

\subsection{\mikko{Estimation results for $\mathrm{Model}_{\mathrm{QI}}$}} \label{subsec:results_2}

\mikko{The estimated transition probabilities of $\mathrm{Model}_{\mathrm{QI}}$, presented in Figure \ref{fig:probabilities_imbalance}, convey a tendency to stick to the current state, similar to what is seen in $\mathrm{Model}_{\mathrm{S}}$. Here, however, this behaviour is more of an artefact --- each \event{ask} and \event{bid} event, by definition, changes the queue imbalance but not necessarily the state variable that is confined to the bins \eqref{eq:bins}.}

\mikko{In contrast to $\mathrm{Model}_{\mathrm{S}}$}, the estimated transition probabilities \mikko{now} depend on the event type \mikko{and we observe remarkable} mirror symmetry\mikko{, whereby $\hat\phi_{\textsf{\event{bid}}}$ equals, up to 1 percentage point, $\hat\phi_{\textsf{\event{ask}}}$ with the order of states reversed. This symmetry is natural, given the definition $\mathrm{QI}(t)$ --- a sell order always decreases} the queue imbalance unless it is submitted inside the \mikko{bid--ask} spread or \mikko{it} depletes the \mikko{current} bid queue\mikko{, whilst an analogous statement is true for buy orders.} As in \mikko{$\mathrm{Model}_{\mathrm{S}}$}, we find again that the probability of a state \mikko{transition is higher when it is towards the equilibrium state, here \state{neutral}, consistent with ideas about the \emph{resilience} of the LOB \citep{Large:2007aa:MeasureResiliency}.}

\mikko{Looking at the estimation results for the excitation kernel in Figure \ref{fig:kernels_imbalance}, we observe that, like in $\mathrm{Model}_{\mathrm{S}}$, self-excitation surpass cross-excitation, with the magnitude of the former and the timescale of the latter being manifestly sensitive to the current state. 
The mirror symmetry seen above in the context of the transition probabilities holds here as well, 
whereby it suffices to only speak about the results for \event{ask} events, whilst analogous conclusions can be drawn on \event{bid} events.}

\begin{figure}[t]
  \centering
  \includegraphics[width=.78\linewidth]{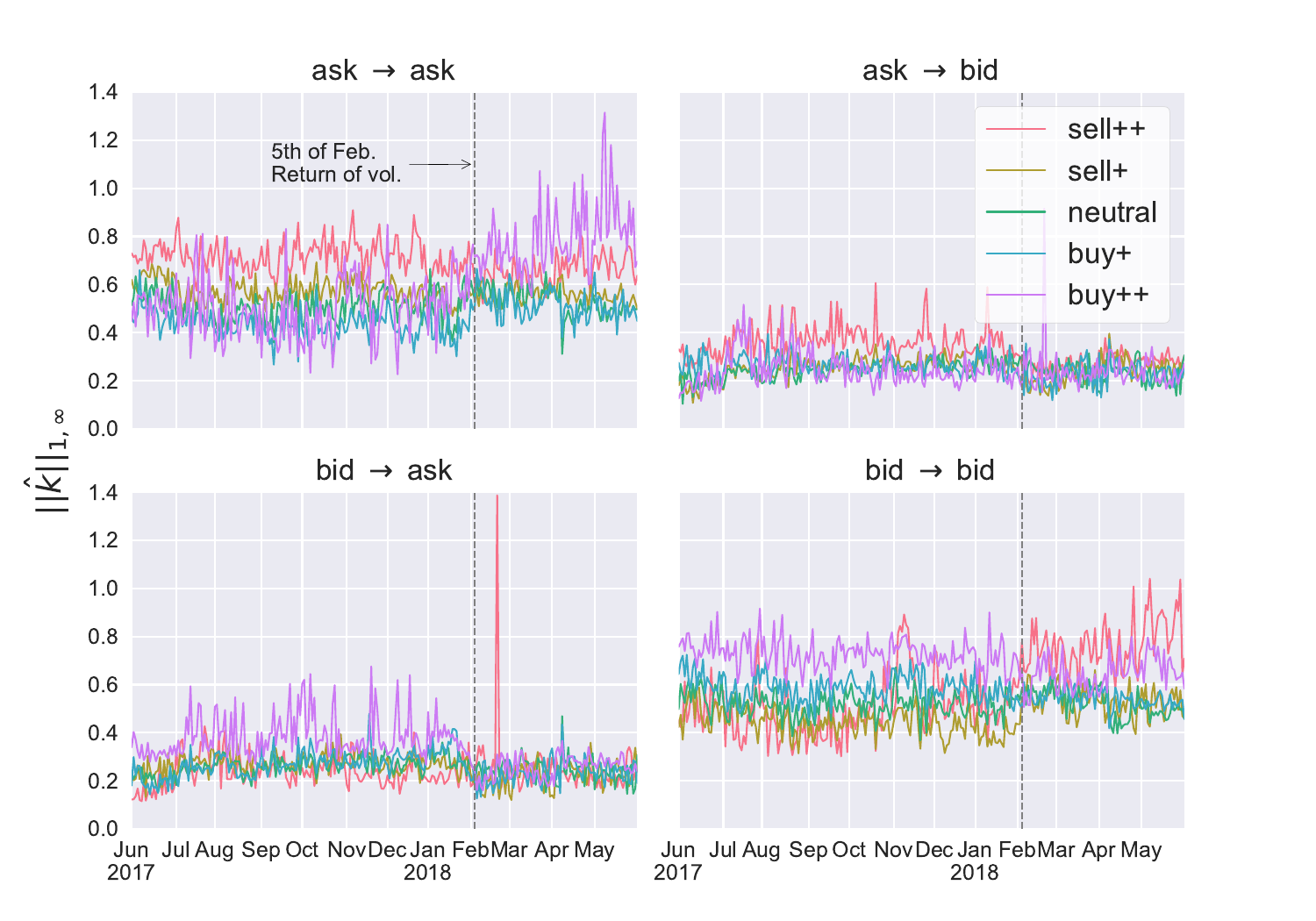}
  \caption{\mikko{Estimated kernel norms $\|\hat \kernels\|_{1,\infty}$ for $\mathrm{Model}_{\mathrm{QI}}$ over time. The dashed line marks 5 February 2018 (``the return of volatility''), a day when the CBOE Volatility Index (VIX) jumped by 116\% to 38 points, a level not seen since August 2015. This day seems to have introduced a systematic change in the magnitude of cross- and self-excitation. The spike in cross-excitation that occurs on 21 February 2018 is linked to a sudden change in market behaviour around 14:00 on that day, when Intel Corporation rolled out patches for its most recent generation of processors. } }
  \label{fig:norms_in_time}
\end{figure}

Even though the \mikko{day-to-day variation} of the estimates is more \mikko{pronounced in} this model \mikko{compared to $\mathrm{Model}_{\mathrm{S}}$,} some clear patterns emerge again.
\mikko{The self-excitation of \event{ask} events consistently increases under heavy sell pressure (state \state{sell++}).} This \mikko{can plausibly be explained by a combination of} a flight to liquidity, \mikko{through the} submission of sell orders, and fear of adverse selection, \mikko{leading to} cancellation\mikko{s} of buy orders, by \mikko{traders} expecting a downwards price move. Under \mikko{mild} buy pressure (state \state{buy+}), \mikko{s}elf-excitation decreases \mikko{so that the corresponding} kernel norm nearly halves.

\mikko{In} the state \state{buy++} (heavy buy pressure), a closer look at the daily estimates of the self-excitation of \event{ask} events plotted in purple in Figure \ref{fig:kernels_imbalance} \mikko{reveals two distinct groups of curves above and below of the median curve.}  \mikko{The daily estimates of the (full) kernel norms plotted in time in Figure \ref{fig:norms_in_time} suggest that the self-excitation effect has undergone a structural break in early February 2018 --- the moment that suddenly marked the end of a year-long period of unusually low volatility in the US equity markets, dubbed ``the return of volatility'' by some financial journalists. It is also worth mentioning that at the same time the price of INTC went above \$50, at which point the} large-tick character of the stock starts to \mikko{weaken.} \mikko{The behaviour characterised by the lower group of curves (pre-February 2018) can be interpreted as traders expecting a price increase and thus deferring the submission of sell orders, whereas the upper group of curves (post-February 2018) hints at a tendency of sellers to seek} an advantageous position in the \mikko{ask} queue. Indeed, a queue imbalance close to \mikko{one implies that} that only a very \mikko{small amount of liquidity is available} at the ask price. Thus, placing a sell limit order following \mikko{a succession of} sell limit orders \mikko{from other traders} allows one to \mikko{acquire} a good position in the queue, should it \mikko{be replenished (which brings} the queue imbalance \mikko{back to equilibrium)}. If, however, the ask queue \mikko{progresses towards depletion}, one has \mikko{still} time to cancel the order whilst the sell limit orders at \mikko{the} front of the queue are matched with incoming buy market orders.

\begin{figure}[t]
  \centering
  \includegraphics[width=.78\linewidth]{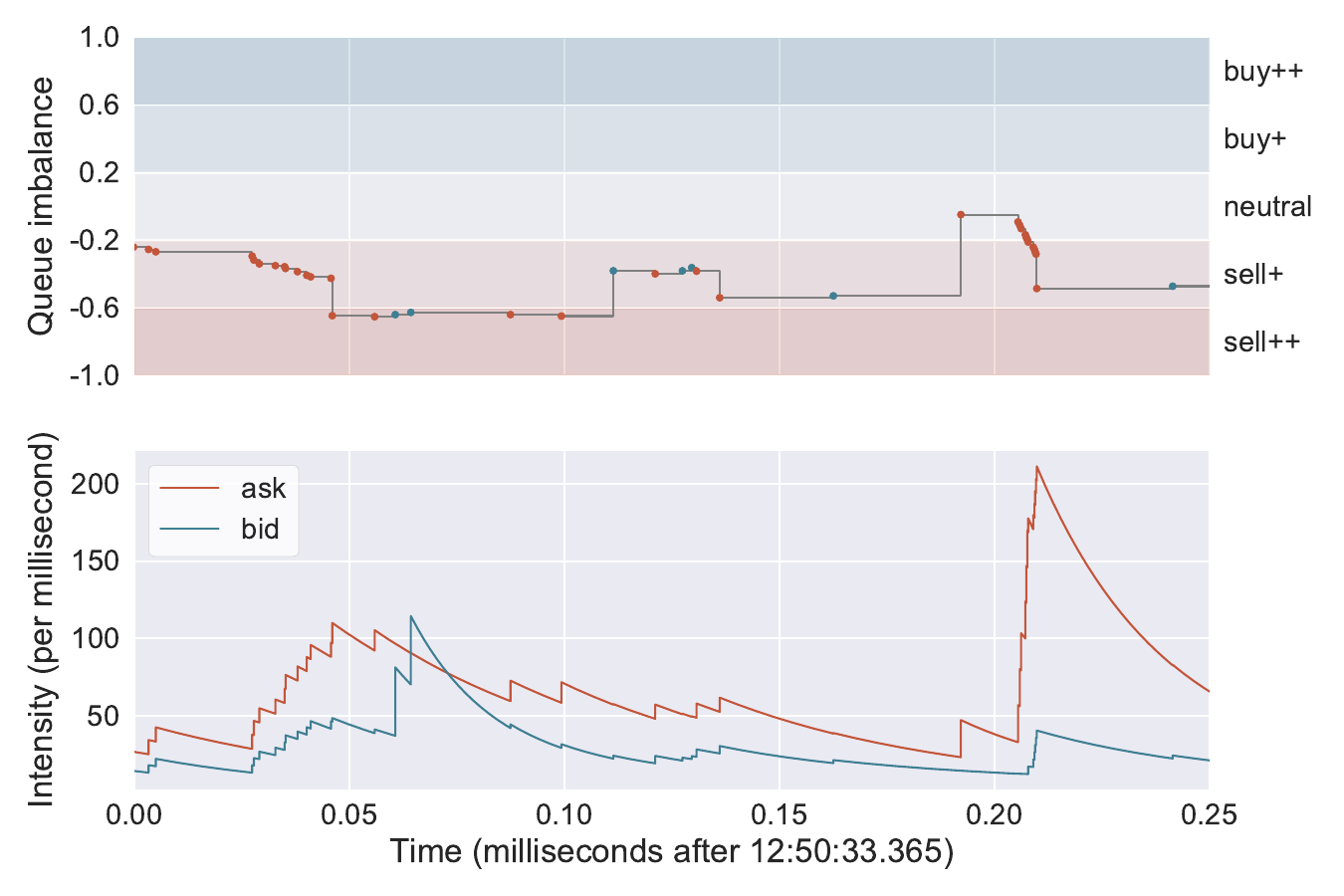}
  \caption{\mikko{The upper panel depicts the evolution of the queue imbalance and level-I order flow of INTC on 13 February 2018 and the \event{ask} (red dots) and \event{bid} (blue dots) events. The lower panel displays the estimated intensities of $\mathrm{Model}_{\mathrm{QI}}$. (The self- and cross-excitation kernel norms of $\mathrm{Model}_{\mathrm{QI}}$ on 13 February 2018 are visualised in Figure \ref{fig:kernels_uncertainty}.)}}
  \label{fig:sample_path}
\end{figure}

\mikko{Under buy pressure (states \state{buy++} and \state{buy+}), the timescale of the cross-excitation from \event{bid} to \event{ask} events becomes almost as short as that of the self-excitation of \event{bid} events. This reflects the resilience of the LOB --- in response to a \event{bid} event, \event{ask} events compete neck and neck with \event{bid} events} precisely when they push \mikko{the queue imbalance back towards the equilibrium state.} \mikko{Recall that the resilience of the LOB} is \mikko{also reinforced} by \mikko{the estimated state transition probabilities, as discussed above.}

\mikko{To exemplify the estimated intensity processes and their state dependence, a very brief extract from the estimated dynamics of $\mathrm{Model}_{\mathrm{QI}}$ is presented in Figure \ref{fig:sample_path}. In particular, we observe} how \mikko{pronounced} the self-excitation of \event{bid} events becomes when the queue imbalance drops below \mikko{$-0.6$, that is, to state \state{sell++}.}

\subsection{Goodness-of-fit diagnostics} \label{subsec:goodness_of_fit}

\mikko{To assess the goodness of fit of the estimated models, we examine the event residuals $\residualOfEvent{e}{n}$ in sample (daily, 12:00--14:30) and out of sample (daily, 14:30--15:00) for both $\mathrm{Model}_{\mathrm{S}}$ and $\mathrm{Model}_{\mathrm{QI}}$.} 
For comparison, we also estimated a\mikko{n ordinary} Hawkes process ($\numberOfStates=1$) with exponential kerne\mikko{l} \mikko{for} the same event types $\eventSpace=\{\mbox{\event{ask}, \event{bid}}\}$ and computed \mikko{its} event residuals. \mikko{Since the state-dependent Hawkes process nests the ordinary Hawkes process, the former will by construction} provide a better fit \mikko{in sample than the latter.} The \mikko{Q--Q} plots in Figure \ref{fig:goodness_of_fit} show that this improvement in \mikko{the goodness of fit extends} out of sample, albeit \mikko{the improvement is smaller} than in sample. \mikko{This is a confirmation that} that \mikko{$\mathrm{Model}_{\mathrm{S}}$ and $\mathrm{Model}_{\mathrm{QI}}$, and their state-dependent features in particular,} are not overfitted.

It should not come as a surprise that the improvement in \mikko{goodness of fit} provided by the state-dependent model looks \mikko{meagre} when one examines the \mikko{Q--Q} plots. \mikko{Indeed, the behaviou\mikko{r} of $\mathrm{Model}_{\mathrm{S}}$, $\mathrm{Model}_{\mathrm{QI}}$ and their ordinary Hawkes process alternative is quite similar when the bid--ask spread and queue imbalance are in their most likely states.}
It is only \mikko{in} the less likely states, \state{2+} in $\mathrm{Model}_{\mathrm{S}}$ and \state{sell++} and \state{buy++} in $\mathrm{Model}_{\mathrm{QI}}$, \mikko{where the difference between the state-dependent and ordinary Hawkes process models becomes more pronounced.}
Thereby, the \mikko{unconditional} distribution of \mikko{residuals does not vary much between these three models.}

\begin{figure}[t]
  \begin{subfigure}[t]{.5\textwidth}
  \centering
  \includegraphics[width=.95\linewidth]{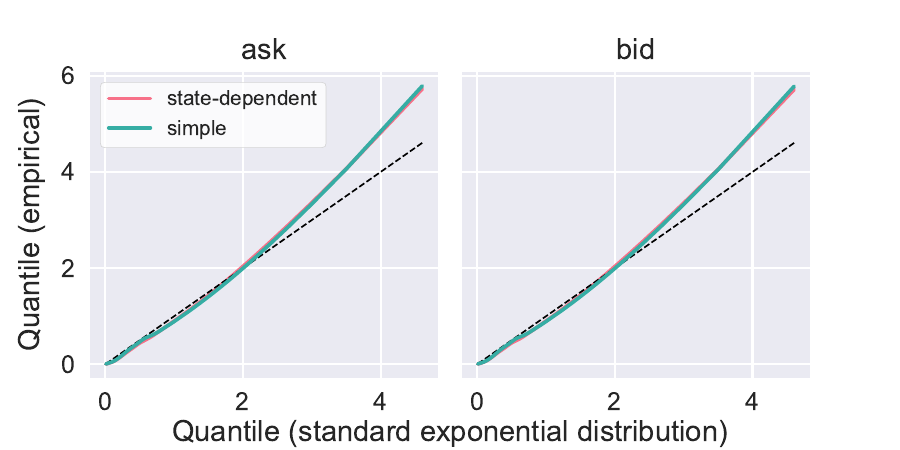}
  \caption{\maxime{$\mathrm{Model}_{\mathrm{S}}$: in sample.}}
  \label{fig:qq_model_S_in_sample}
\end{subfigure}
\begin{subfigure}[t]{.5\textwidth}
  \centering
  \includegraphics[width=.95\linewidth]{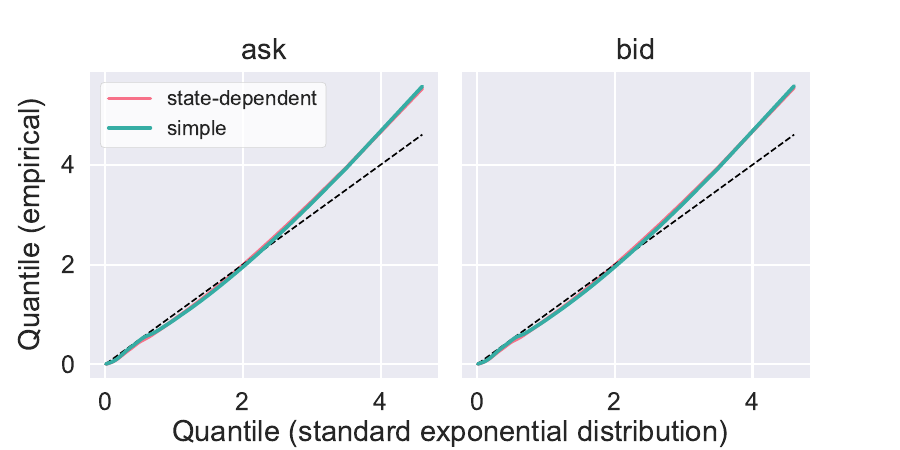}
  \caption{\maxime{$\mathrm{Model}_{\mathrm{S}}$: out of sample.}}
  \label{fig:qq_model_S_out_sample}
\end{subfigure}
\\
\begin{subfigure}[t]{.5\textwidth}
  \centering
  \includegraphics[width=.95\linewidth]{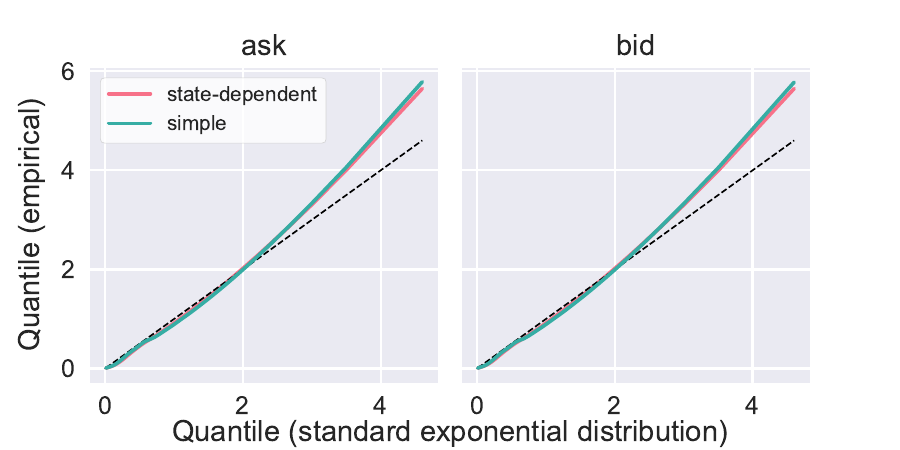}
  \caption{\maxime{$\mathrm{Model}_{\mathrm{QI}}$: in sample.}}
  \label{fig:qq_model_QI_in_sample}
\end{subfigure}
\begin{subfigure}[t]{.5\textwidth}
  \centering
  \includegraphics[width=.95\linewidth]{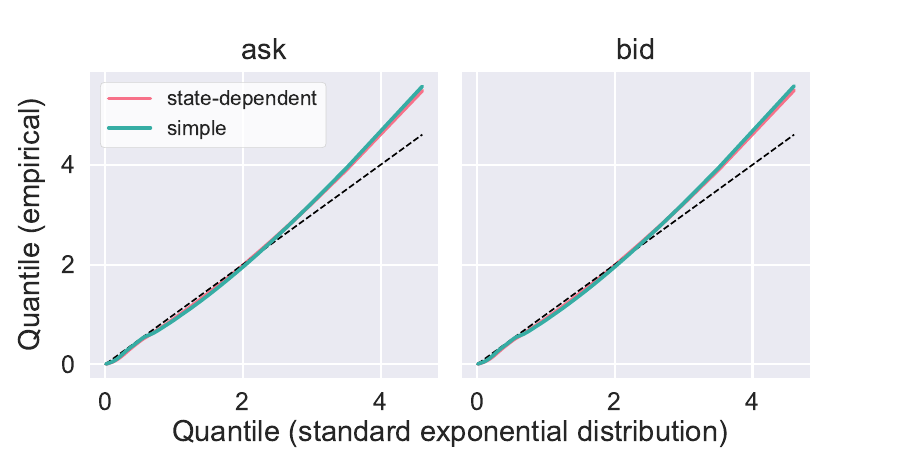}
  \caption{\maxime{$\mathrm{Model}_{\mathrm{QI}}$: out of sample.}}
  \label{fig:qq_model_QI_out_sample}
\end{subfigure}
  \caption{\maxime{In-sample (12:00--14:30) and out-of-sample (14:30--15:00) Q--Q plots of event residuals under $\mathrm{Model}_{\mathrm{S}}$, $\mathrm{Model}_{\mathrm{QI}}$ (state-dependent) and an ordinary Hawkes process (simple). The residuals of the $i$th day are computed using the ML estimates $(\hat\baseRates^{(i)}, \hat\impactCoeffs^{(i)}, \hat\decayCoeffs^{(i)})$ obtained from the 12:00--14:30 period. The empirical quantiles are obtained by pooling the residuals of all 250 trading days. The two panels in each sub-figure correspond to the sequences of residuals $(\residualOfEvent{e}{n})$ for $e\in\eventSpace = \{\text{\event{ask}},\text{\event{bid}} \}$.}}
  \label{fig:goodness_of_fit}
\end{figure}

\subsection{\mikko{Endogeneity is state-dependent}}\label{sec:endo}

\mikko{The recent popularity of Hawkes processes in the modelling of high-frequency financial data partly stems from their ability to quantify the \emph{endogeneity} of market activity. Indeed, based on the cluster representation of Hawkes processes \citep{Hawkes:1974aaClusterRepresentation} and the theory of branching processes \citep{HarrisTheodoreE1963Ttob}, the expected number of new events triggered by each event through self-excitation in a univariate ordinary Hawkes process equals the $L^1$-norm of its self-excitation kernel, provided it is less than one. The threshold one is the critical boundary for the stability of the process and validity of the cluster representation.} \mikko{Hawkes processes fitted to high-frequency financial data often exhibit kernel norms slightly below one, whilst the base rates tend to be relatively low in comparison. This phenomenon has been interpreted by \citet{Filimonov2012} and \citet{Hardiman2013} as evidence that most market events are endogenous, mere responses to earlier events, dwarfing the flow of less frequent exogenous events that are driven by new information. They dub the phenomenon \emph{critical reflexivity}, which is a nod to George Soros and his reflexivity theory on the endogeneity of financial markets \citep{Soros1989}.}

\mikko{In the context of state-dependent Hawkes processes,} the kernel $\collection{\kernelAtoB{e'}{e}(\cdot, x)}{e',e\in\eventSpace}$, for any \mikko{state} $x\in\stateSpace$, defines \mikko{a multivariate ordinary} Hawkes process. \mikko{Whilst in the multivariate case there is no direct analogue of the kernel norm --- at least none with an equally clear-cut interpretation --- }
 the spectral radius $\rho(x)$ of the $\numberOfEvents\times\numberOfEvents$ matrix $(m_{ij})$, where $m_{ij}:=\|\kernelAtoB{j}{i}(\cdot,x)\|_{1,\infty}$, \mikko{can be understood as a measure of endogeneity in this ordinary Hawkes process for each $x\in\stateSpace$.} \mikko{It also characterises} the \mikko{stability} of \mikko{the} process\mikko{, whereby} $\rho(x)<1$ \mikko{is a sufficient condition for} the existence of and convergence to a stationary version \citep{Bremaud:1996aa:StabilityNonLinearHawkes}. Figure \ref{fig:radii} displays the daily estimates of $\rho(x)$ for both \mikko{$\mathrm{Model}_{\mathrm{S}}$ and $\mathrm{Model}_{\mathrm{QI}}$ as a function of $x \in \stateSpace$.} \mikko{We observe a remarkably clear pattern of $\rho(x)$ being almost uniformly higher in the disequilibrium states (\state{2+}, \state{sell++}, \state{buy++}) than in the equilibrium states (\state{1}, \state{sell+}, \state{neutral}, \state{buy+}). In particular, in $\mathrm{Model}_{\mathrm{S}}$,} the spectral radius $\rho(\mbox{\state{2+}})$ is systematically above the critical value 1\mikko{, whilst in $\mathrm{Model}_{\mathrm{QI}}$, the values of $\rho(\mbox{\state{sell++}})$ and $\rho(\mbox{\state{buy++}})$ are above 0.9 half of the time, exceeding 1 occasionally.} \mikko{These results thus open a new perspective on critical reflexivity, showing that it is in fact a largely state-dependent phenomenon, observed only in particular circumstances. They also lend credence to Soros's remark that \emph{``[e]ven in the financial markets demonstrably reflexive processes occur only intermittently''} \citep[p.~29]{soros2009crash}.
}

\begin{figure}[t]
\begin{subfigure}[t]{0.45\textwidth}
  \centering
  \includegraphics[width=.67\linewidth]{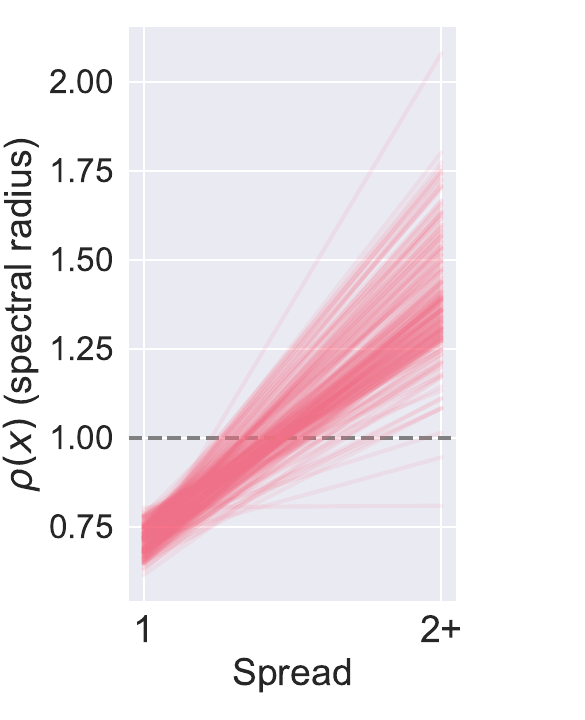}
  \caption{\mikko{$\mathrm{Model}_{\mathrm{S}}$}}
  \label{fig:radii_S}
\end{subfigure}
\begin{subfigure}[t]{0.55\textwidth}
  \centering
  \includegraphics[width=.9\linewidth]{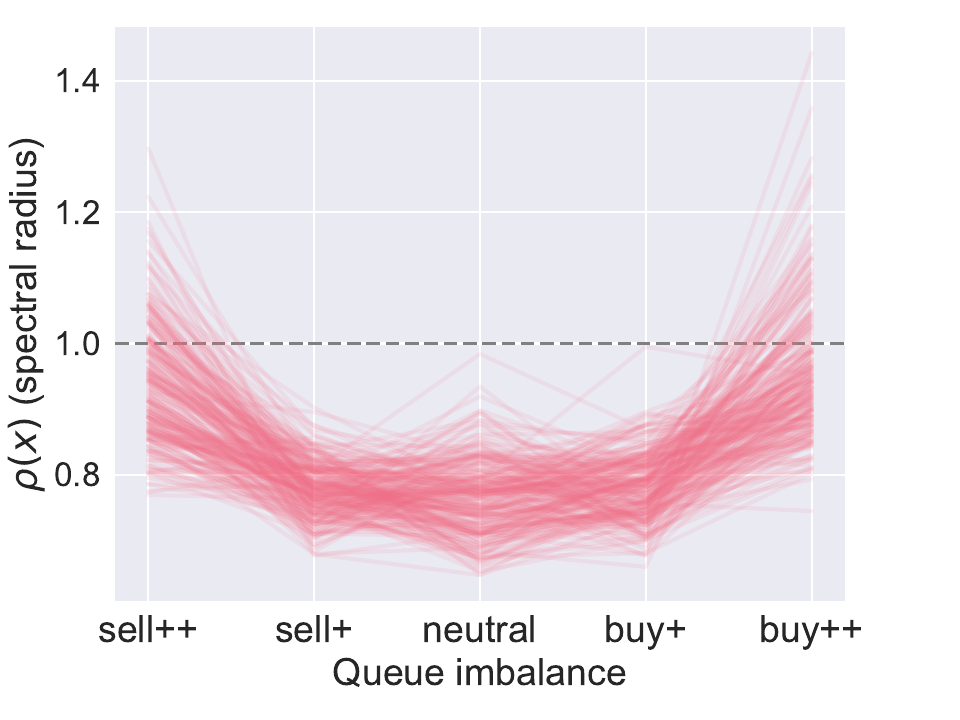}
  \caption{\mikko{$\mathrm{Model}_{\mathrm{QI}}$}}
  \label{fig:radii_QI}
\end{subfigure}
\caption{\mikko{The estimated spectral radius $\hat\rho(x)$ as a function of $x\in\stateSpace$ under $\mathrm{Model}_{\mathrm{S}}$ and $\mathrm{Model}_{\mathrm{QI}}$.
The daily profiles $x\mapsto \hat\rho(x)^{(i)}$, $i=1,\ldots,250$, are represented by the red translucent curves.}}
\label{fig:radii}
\end{figure}

\mikko{The increase in endogeneity in the disequilibrium} states \mikko{seems attributable to strategies employed by high-frequency traders (HFTs), who become active in these states, in anticipation of a price move. In a recent study, \citet{Lehalle:2018:SignalsOptimalTrading} analyse a unique data set from Nasdaq Stockholm, where the identities of the buyer and seller in each transaction were disclosed by the exchange until 2014. In particular, they show that when the queue imbalance increases, the trading activity of market participants they classify as proprietary HFTs is amplified, in the direction of the imbalance. The pronounced sub-millisecond self-excitation effects seen in Figure \ref{fig:kernels_imbalance}, which are the key driver behind the high spectral radii $\rho(x)$, concur with the trading patterns observed by \citet{Lehalle:2018:SignalsOptimalTrading}. Besides its use as a trading signal, \citet{Lehalle:2018:SignalsOptimalTrading} find the queue imbalance to be mean-reverting, which is similarly compatible with our results. A higher spectral radius in disequilibrium states corresponds here to an increase in market activity, which, reinforced by the structure of the estimated transition probabilities (Figure \ref{fig:transition_probabilities}), is more likely to push the queue imbalance towards equilibrium than vice versa.  
}

\subsection{\mikko{Event--state} structure of limit order books}

\mikko{We could alternatively build a state-dependent variant of a Hawkes process in the following, conceptually simpler way.} Using the representation $\Nhybrid$ of the counting and state processes $(\N, X)$, one could specify an intensity of the form
\begin{equation}  \label{eq:intensity_alternative}
	\intensityOfEventState{e}{x}(t) = \nu_{ex} + \sum_{e'\in\eventSpace, x'\in\stateSpace} \int_{[0,t)}\kernelAtoB{e'x'}{ex}(t-s)d\NhybridOfEventState{e'}{x'}(s),\quad t\geq 0, \quad e\in\eventSpace,\quad x\in\stateSpace,
\end{equation}
instead of \eqref{eq:hybrid_intensity}. \mikko{This approach would in fact be tantamount to simply using} a $\numberOfEvents\numberOfStates$-dimensional \mikko{ordinary} Hawkes process.

\mikko{The intensity \eqref{eq:intensity_alternative} makes self- and cross-excitation state-dependent,} but the \mikko{simple} structure of the state process \mikko{$X$} in Definition \ref{def:state_dependent_hawkes} is lost. \mikko{Namely, under} \eqref{eq:intensity_alternative}, the transition probabilities of the state process depend \mikko{not only} on the current state but on the entire history. However, LOBs \mikko{enjoy} a certain \mikko{\emph{event--state structure ---}} knowing the current state of the LOB and the characteristics of the next order suffices to (approximately) determine the next state. State-dependent Hawkes processes are by \mikko{construction} able to \mikko{reproduce} such an event-state structure and, therefore, compared to the alternative \eqref{eq:intensity_alternative}, we expect them to provide in general a better statistical \mikko{description of the LOB}. Moreover, \mikko{the} model given by \eqref{eq:intensity_alternative} requires \mikko{a kernel with} $\numberOfEvents^2\numberOfStates^2$ \mikko{components} whereas a state-dependent Hawkes process can be specified \mikko{more parsimoniously}, using only $\numberOfEvents^2\numberOfStates$ \mikko{components}.

\begin{figure}[t]
  \centering
  \includegraphics[width=1\linewidth]{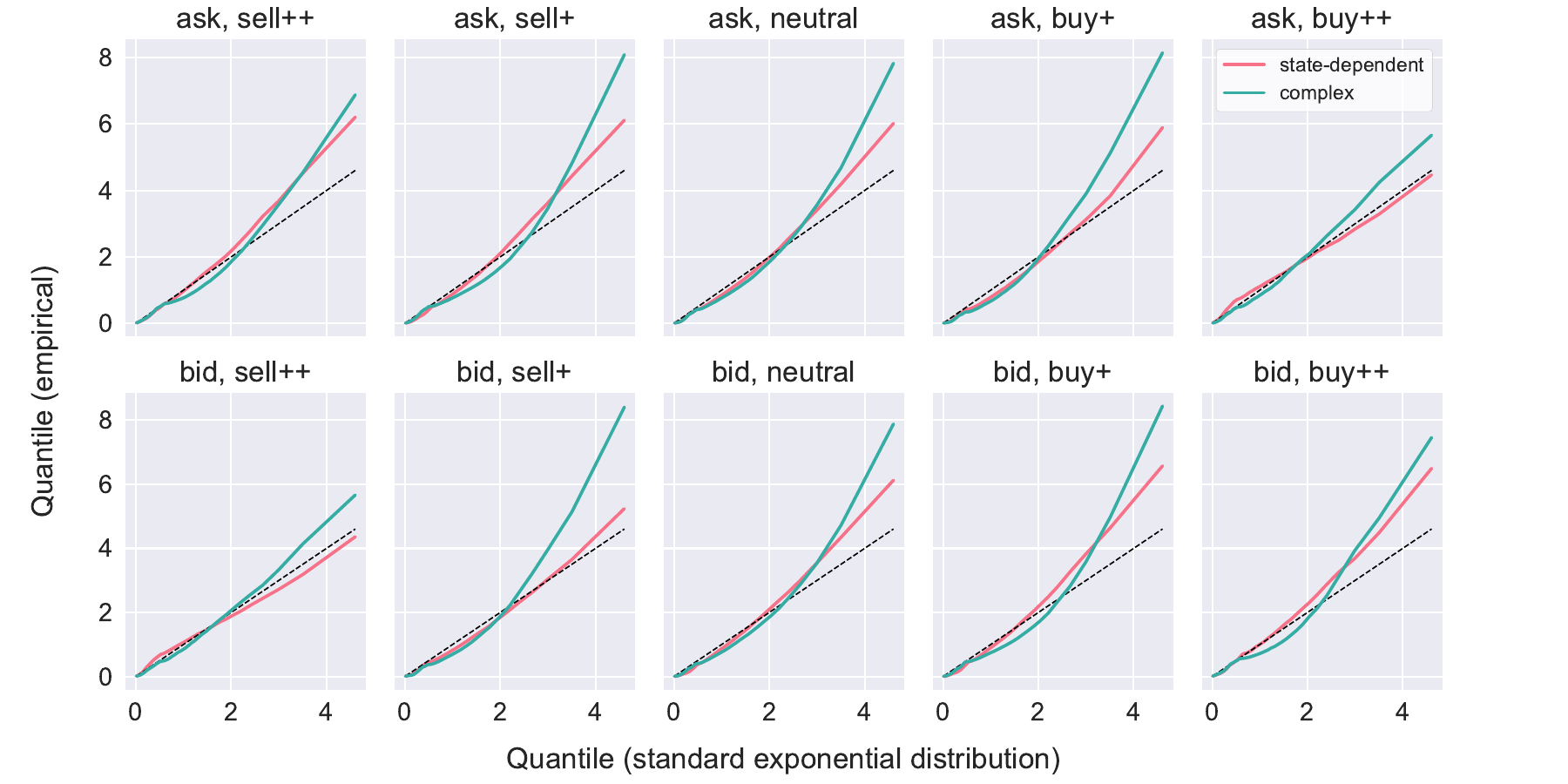}
  \caption{In-sample (\mikko{12:00--14:30}) \mikko{Q--Q} plots of total residuals \mikko{of $\mathrm{Model}_{\mathrm{QI}}$} (state-dependent) and the alternative model given by \eqref{eq:intensity_alternative} (complex) \mikko{on 11} May 2018. \mikko{Each panel corresponds to a} sequence of residuals $(\residualOfEventState{e}{x}{n})$ \mikko{for all $e\in\eventSpace$ and $x\in\stateSpace$.}}
  \label{fig:qq_complex}
\end{figure}

\mikko{To compare this alternative model to state-dependent Hawkes processes, we estimate an ordinary Hawkes process with intensity \eqref{eq:intensity_alternative}, choosing an exponential form of the kernel $\kernels$ and specifying the event types in $\eventSpace$ and the state space $\stateSpace$ as in $\mathrm{Model}_{\mathrm{QI}}$.}
Because of the higher dimensionality of this alternative model, estimation becomes computationally more expensive,  
\mikko{which is why we use INTC data for May 2018 only.}
In Figure \ref{fig:qq_complex}, this alternative model is compared to \mikko{$\mathrm{Model}_{\mathrm{QI}}$} via \mikko{Q--Q} plots of the total residuals. We \mikko{choose} to display the results for a day when the alternative model provides one of its best fits, \mikko{although the goodness of fit does not} vary \mikko{markedly} across the 22 trading days \mikko{in May 2018}. Even though it has \mikko{fewer} parameters (92 \mikko{as opposed to} 210), we observe that \mikko{$\mathrm{Model}_{\mathrm{QI}}$} provides a better in-sample fit. These empirical results \mikko{thus underline the significance of}  the \mikko{event--state} structure of LOBs.

\section{Discussion}

State-dependent Hawkes processes \mikko{enable us to} \mikko{model} two-way interaction between a self- and cross-exciting point process, \mikko{governing the temporal flow of \emph{events}}, and a state process, \mikko{describing} a \mikko{\emph{system}}. \mikko{In the context of} LOB modelling, they \mikko{provide a probabilistic foundation for a novel class of continuous-time models that encapsulate the} feedback loop between the order flow and \mikko{the shape of the LOB.} \mikko{Our estimation results, using these models and one year's worth of high-frequency LOB data on the stock of Intel Corporation, reveal that state dependence is indeed significant, as we uncover several robust patterns that persist throughout the daily estimation results. In particular, we find that market endogeneity, measured through the magnitude of self- and cross-excitation is state-dependent, being most pronounced in disequilibrium states of the LOB.}

\mikko{Our results also validate the event--state structure of LOBs} that is embedded in the definition of \mikko{a state-dependent Hawkes process}. However, we do not claim that \mikko{$\mathrm{Model}_{\mathrm{S}}$ and $\mathrm{Model}_{\mathrm{QI}}$ would be} the best possible representations of the \mikko{aforementioned} feedback loop \mikko{--- we recognise that they could be refined as follows:}
\begin{enumerate}[label=\textup{(\roman*)}]
\item\label{item:ref1} \mikko{The exponential excitation kernels could be replaced by power laws, motivated by the non-parametric estimation results on ordinary Hawkes processes \citep{Hardiman2013, bacry2016estimation}.} A \mikko{numerically more convenient} alternative \mikko{would be to} use a linear combination of \mikko{exponentials} within the parametric framework to mimic \mikko{slow power-law decay} \citep{Rambaldi2015, Lu2018}.
\item \mikko{The base rates could be made state-dependent by replacing $\baseRate{e}$ with $\baseRate{e}(X(t))$ in \eqref{eq:event_intensity}.} Note that in this case, the model would contain both a continuous-time Markov chain ($\kernels\equiv 0$) and \mikko{an ordinary} Hawkes process as \mikko{special} cases.
\item \mikko{One might argue for excitation kernels that allow for negative values, to capture inhibition effects that are known to exist in LOB data \citep{Lu2018}. To ensure the non-negativity of the intensity, this would require transforming the right-hand side of \eqref{eq:event_intensity} by a non-linear function.}
\item\label{item:ref4} \mikko{More} granular event types and states \mikko{would provide} more nuanced understanding of the LOB dynamics. Besides, notice that the present framework \mikko{accommodates} multiple state variables. For example, $X$ could in fact \mikko{jointly} represent both the \mikko{bid--ask} spread and the queue imbalance \mikko{using} the state space
\begin{equation*}
\stateSpace=\{ \mbox{(\state{1}, \state{sell++})},\ldots, \mbox{(\state{1}, \state{buy++})}, \mbox{(\state{2+}, \state{sell++})}, \ldots, \mbox{(\{\state{2+}, \state{buy++})}\}.
\end{equation*}
\end{enumerate}

\mikko{However}, no matter how one modifies the intensity \eqref{eq:event_intensity} \mikko{by implementing any of \ref{item:ref1}--\ref{item:ref4}} to incorporate one's views on the feedback loop, the \mikko{event--state} structure of the model \mikko{remains intact}. The process still falls within the class of hybrid marked point processes \citep{morariu:2017:hybrid}, \mikko{implying} that the theoretical results of this paper (existence, uniqueness, \mikko{separability of the} likelihood \mikko{function}) still \mikko{apply}.

\appendix
\section{Appendix} \label{sec:appendix}

\subsection{Proofs}\label{sec:proofs}

\begin{proof}[Proof of Theorem \ref{thm:characterisation}]
	The statement \mikko{follows} by applying Theorem 2.13 in \citet{morariu:2017:hybrid}. We only need to check that if $(\N, X)$ is a state-dependent Hawkes process, then $\Nhybrid$ admits an $\internalHistoryCollection{\Nhybrid}$-intensity. By Proposition 3.1 in \citet{jacod:1975:projection}, $\Nhybrid$ admits \mikko{an} $\internalHistoryCollection{\Nhybrid}$-compensator $\Lambda$ given by
	\begin{equation*}
		\Lambda(dt,de,dx) = \sum_{n\in\integers}\frac{G_n(dt-T_n, de, dx)}{G_n([t-T_n,\infty],\eventSpace,\stateSpace)}\indicator_{\{T_n<t\leq T_{n+1}\}},
	\end{equation*}
	 and thus this holds if the conditional distributions $G_n$ of $(T_{n+1}-T_{n}, E_{n+1}, X_{n+1})$ with respect to $\internalHistory{\Nhybrid}{T_n}$ are absolutely continuous with respect to $dt\mu_{e}(de)\mu_{x}(dx)$  on $(0, T_{n+1}-T_n]\times\eventSpace\times\stateSpace$, where $\mu_e$ and $\mu_x$ are the counting measures on $\eventSpace$ and $\stateSpace$, respectively. Moreover, by definition of $\Nhybrid$, $\Lambda(dt, de, \stateSpace)$ is \mikko{an} $\internalHistoryCollection{\Nhybrid}$-compensator \mikko{of} $\N$. But since $\N$ admits an $\internalHistoryCollection{\Nhybrid}$-intensity, by uniqueness of the compensator \citep[Theorem 1.25, p.~39]{kallenberg2017random}, we \mikko{necessarily have} with probability one that
	\begin{equation*}
		\sum_{x\in\stateSpace}G_n(dt-T_n, \{e\}, \{x\})\indicator_{\{T_n<t\leq T_{n+1}\}} = f(t,e)dt\indicator_{\{T_n<t\leq T_{n+1}\}},\quad e\in\eventSpace, \quad n\in\integers,
	\end{equation*}
	for some $\internalHistory{\Nhybrid}{T_n}\otimes\mathcal{B}(\realLine) \otimes\mathcal{P}(\eventSpace)$-measurable function $f$, which \mikko{concludes the proof}.
\end{proof}

\begin{proof}[Proof of Theorem \ref{thm:existence_uniqueness}]
	By Theorem \ref{thm:characterisation}, it is sufficient to show that each of the two conditions ensures the existence and uniqueness of a non-explosive point process $\Nhybrid$ with an $\internalHistoryCollection{\Nhybrid}$-intensity given by \eqref{eq:hybrid_intensity}, which is achieved by applying Theorems 2.17 and 2.21 in \citet{morariu:2017:hybrid}.
\end{proof}

\begin{proof}[Proof of Theorem \ref{thm:likelihood_separability}]
	By Theorem \ref{thm:characterisation}, we know that $\Nhybrid$ \mikko{has} intensity $\intensityOfEventsStates$, which is given by \eqref{eq:hybrid_intensity}. Hence, by applying Proposition 7.3.III in \citet[p.~251]{daleyVereJonesVolume1}, we can express the \mikko{log likelihood function} as
\begin{equation} \label{eq:log-likelihood-formula}
	\ln \likelihood (\transitionProbabilities, \baseRates, \kernelsParams) = \sum_{n=1}^{N}\ln\intensityOfEventState{e_n}{x_n}(t_n) - \int_0^T\sum_{e\in\eventSpace, x\in\stateSpace}\intensityOfEventState{e}{x}(t) dt.
\end{equation}
Plugging \eqref{eq:hybrid_intensity} in \eqref{eq:log-likelihood-formula} and using that $\sum_{x'\in\stateSpace}\transitionProbabilityIfEventFromAtoB{e}{x}{x'}=1$, $e\in\eventSpace$, $x\in\stateSpace$, yields \eqref{eq:log-likelihood}, from which it is immediate that $(\hat\transitionProbabilities,\hat\baseRates,\hat\kernelsParams)\in\argmax_{\transitionProbabilities,\baseRates,\kernelsParams}\likelihood(\transitionProbabilities,\baseRates,\kernelsParams)$ if and only if
	\begin{equation*}
		\begin{dcases}
		 \hat\transitionProbabilities &\in \argmax_{\transitionProbabilities} \sum_{n=1}^{N}\ln\transitionProbabilityIfEventFromAtoB{e_n}{x_{n-1}}{x_n}  , \\
		(\hat \baseRates, \hat \kernelsParams ) &\in \argmax_{\baseRates,\kernelsParams } \sum_{n=1}^{N}\ln \intensityOfEvent{e_n}(t_n) - \int_0^T\sum_{e\in\eventSpace}\intensityOfEvent{e}(t)dt,
	\end{dcases}
	\end{equation*}
	where the first optimisation problem is performed under the constraint that $\transitionProbabilityIfEvent{e}$ is a transition probability matrix, $e\in\eventSpace$. By solving this optimisation problem with the method of Lagrange multipliers, we obtain the claimed expression for $\hatTransitionProbabilityIfEventFromAtoB{e}{x}{x'}$.
\end{proof}

\begin{proof}[Proof of Theorem \ref{thm:residuals}]
	The statement follows directly from Theorem 1 in \citet{Brown:1988aa}. Note that this theorem requires that, with probability one, $\int_0^t\intensityOfEvent{e}(s)ds\rightarrow\infty$ and $\int_0^t\intensityOfEventState{e}{x}(s)ds\rightarrow\infty$ as $t\rightarrow\infty$, $e\in\eventSpace$, $x\in\stateSpace$. The first condition is satisfied because, in Definition \ref{def:state_dependent_hawkes}, we assume that all the base rates are strictly positive ($\baseRates\in\realLinePositive^{\numberOfEvents}$). By Lemma 17 in \citet[p.~41]{bremaud:1981:MartingaleDynamics}, the second condition is equivalent to $\NhybridOfEventState{e}{x}(t)\rightarrow\infty$, $t\rightarrow\infty$, with probability one, $e\in\eventSpace$, $x\in\stateSpace$, which is assumed here.
\end{proof}

\subsection{Details on maximum likelihood estimation}

\subsubsection{Formulae for state-dependent Hawkes processes with exponential kernels} \label{subsec:formulae_liklihood}

In the case of exponential \mikko{kernel} $\kernels$ given by \eqref{eq:kernels_exp}, the following formulae can be derived for the second and third terms of the \mikko{log likelihood function} in \eqref{eq:log-likelihood}, denoted by $l_+$ and $l_-$, respectively. Here, we consider a general time horizon $(t_0, T]$, meaning that the origin of time is not necessarily $t_0=0$ and that the times $t^{ex}_n \leq t_0$ are treated like an initial condition\mikko{, and we have}
\begin{align*}
l_+ &= \sum_{e}\sum_{t^e_n >t_0}\ln \intensityOfEvent{e}(t_n^e) = \sum_{e}\sum_{t^e_n >t_0}\ln\left( \nu_e + \sum_{e'x'}\alpha_{e'x'e}S_{e'x'e}(t^e_n) \right), \nonumber \\
S_{e'x'e}(s,t) :&= \sum_{s<t^{e'x'}_i<t}\exp(-\beta_{e'x'e}(t-t^{e'x'}_i)), \quad S_{e'x'e}(t) := S_{e'x'e}(-\infty,t), \nonumber \\
S_{e'x'e}(t) &= \exp(-\beta_{e'x'e}(t-s))S_{e'x'e}(-\infty, s] + S_{e'x'e}(s, t), \label{eq:partial_sum_recursion} \\
l_- &= \int_{t_0}^T\sum_e \intensityOfEvent{e}(t)dt = \int_{t_0}^T\sum_e \left(  \nu_e + \sum_{e'x'}\sum_{t^{e'x'}_i < t}\alpha_{e'x'e}\exp(-\beta_{e'x'e}(t-t^{e'x'}_i) \right)dt \nonumber \\
&= \sum_{e} \nu_{e} (T-t_0) + \sum_{e'x'}\sum_{t^{e'x'}_i\leq t_0}\frac{\alpha_{e'x'e}}{\beta_{e'x'e}}(\exp(-\beta_{e'x'e}(t_0-t^{e'x'}_i))-\exp(-\beta_{e'x'e}(T-t^{e'x'}_i)) \nonumber \\
&\qquad + \sum_{t^{e'x'}_i> t_0} \frac{\alpha_{e'x'e}}{\beta_{e'x'e}}(1-\exp(-\beta_{e'x'e}(T-t^{e'x'}_i)). \nonumber
\end{align*}
The gradients can then also be computed \mikko{via}
\begin{align*}
\frac{\partial l_+}{\partial \nu_k} &= \sum_{t^k_n >t_0}\left( \nu_k + \sum_{e'x'}\alpha_{e'x'k}S_{e'x'k}(t^k_n)  \right)^{-1}, \\
\frac{\partial l_+}{\partial \alpha_{ijk}} &= \sum_{t^k_n >t_0}\frac{S_{ijk}(t_n^e)}{\nu_k + \sum_{e'x'}\alpha_{e'x'k}S_{e'x'k}(t^k_n)}, \\
\frac{\partial l_+}{\partial \beta_{ijk}} &= - \sum_{t^k_n >t_0}\frac{\alpha_{ijk}S_{ijk}^{(1)}(t^e_n)}{\nu_k + \sum_{e'x'}\alpha_{e'x'k}S_{e'x'k}(t^k_n)}, \\
S_{e'x'e}^{(1)}(s,t) :&= \sum_{s<t^{e'x'}_i<t}(t - t^{e'x'}_i)\exp(-\beta_{e'x'e}(t-t^{e'x'}_i)), \quad S_{e'x'e}^{(1)}(t) = S_{e'x'e}^{(1)}(-\infty, t), \\
S_{e'x'e}^{(1)}(t) &= \exp(-\beta_{e'x'e}(t-s))\left(S^{(1)}_{e'x'e}(-\infty, s] +  (t-s)S_{e'x'e}(-\infty, s]\right) + S_{e'x'e}^{(1)}(s, t), \\
\frac{\partial l_-}{\partial \nu_k} &= T - t_0, \\
\frac{\partial l_-}{\partial \alpha_{e'x'e}} &= \sum_{t^{e'x'}_i\leq t_0}\frac{1}{\beta_{e'x'e}}(\exp(-\beta_{e'x'e}(t_0-t^{e'x'}_i))-\exp(-\beta_{e'x'e}(T-t^{e'x'}_i)) \\
&\qquad+ \sum_{t^{e'x'}_i> t_0} \frac{1}{\beta_{e'x'e}}(1-\exp(-\beta_{e'x'e}(T-t^{e'x'}_i)), \\
\frac{\partial l_-}{\partial \beta_{e'x'e}} &= \sum_{t^{e'x'}_i\leq t_0}\frac{\alpha_{e'x'e}}{\beta_{e'x'e}}((T-t^{e'x'}_i)\exp(-\beta_{e'x'e}(T-t^{e'x'}_i))-(t_0-t^{e'x'}_i)\exp(-\beta_{e'x'e}(t_0-t^{e'x'}_i))) \\
&\qquad-\frac{\alpha_{e'x'e}}{\beta_{e'x'e}^{2}}(\exp(-\beta_{e'x'e}(t_0-t^{e'x'}_i))-\exp(-\beta_{e'x'e}(T-t^{e'x'}_i)) \\
&\qquad+ \sum_{t^{e'x'}_i> t_0} \frac{\alpha_{e'x'e}}{\beta_{e'x'e}}(T-t^{e'x'}_i)\exp(-\beta_{e'x'e}(T-t^{e'x'}_i)) - \frac{\alpha_{e'x'e}}{\beta_{e'x'e}^2}(1-\exp(-\beta_{e'x'e}(T-t^{e'x'}_i))).
\end{align*}
Besides, the efficient computation of the residuals is based on the \mikko{identity}
\begin{align*}
	\int_s^t \intensityOfEvent{e}(u)du = & \,\, (t-s)\baseRate{e} + \sum_{e'x'}C_{e'x'e}(t) - C_{e'x'e}(s) - \sum_{e'x'}\frac{\impactCoeff{e'}{x'}{e}}{\decayCoeff{e'}{x'}{e}}(S_{e'x'e}(t)-S_{e'x'e}(s)), \\
	C_{e'x'e}(t) := &\,\, \sum_{t^{e'x'}_i<t}\frac{\impactCoeff{e'}{x'}{e}}{\decayCoeff{e'}{x'}{e}}\exp(-\decayCoeff{e'}{x'}{e}(t_0\vee t^{e'x'}_i - t^{e'x'}_i)).
\end{align*}

\mikko{The accompanying Python library}  \texttt{mpoints} implements the above formulae in C via the Cython \mikko{extension of} Python, allowing us to drastically reduce the computation time (up to 300 times faster computations compared to a \mikko{plain} Python implementation \mikko{using NumPy}). This played a crucial role in \mikko{making} the present study \mikko{computationally feasible.}

\begin{figure}[!t]
\begin{subfigure}[t]{.5\textwidth}
  \centering
  \includegraphics[width=.99\linewidth]{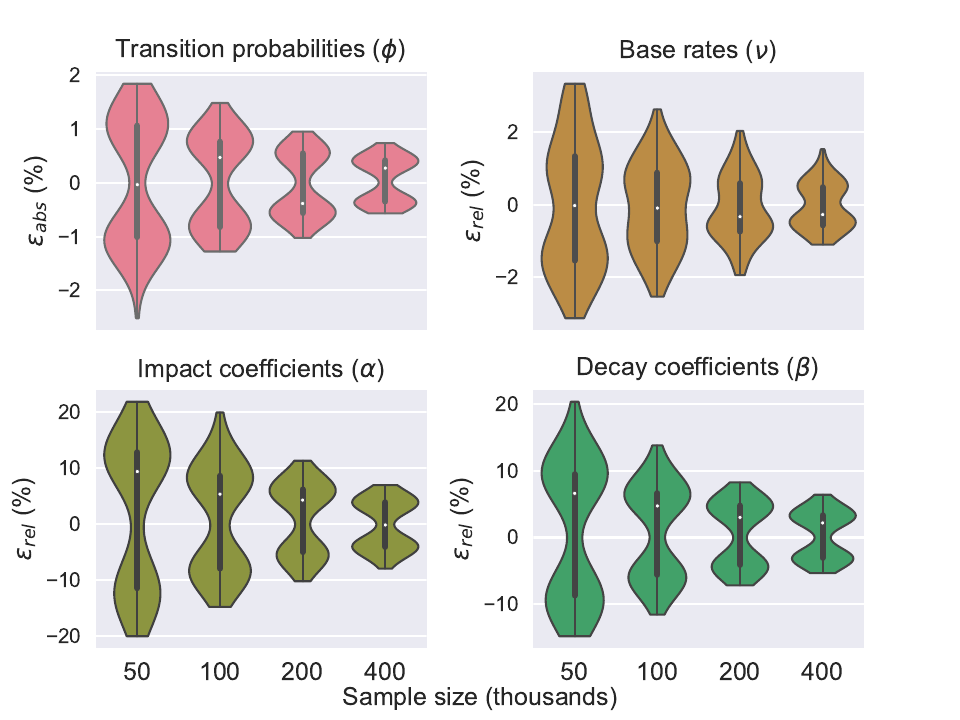}
  \caption{Specification 1.}
  \label{fig:convergence0}
\end{subfigure}
\begin{subfigure}[t]{.5\textwidth}
  \centering
  \includegraphics[width=.99\linewidth]{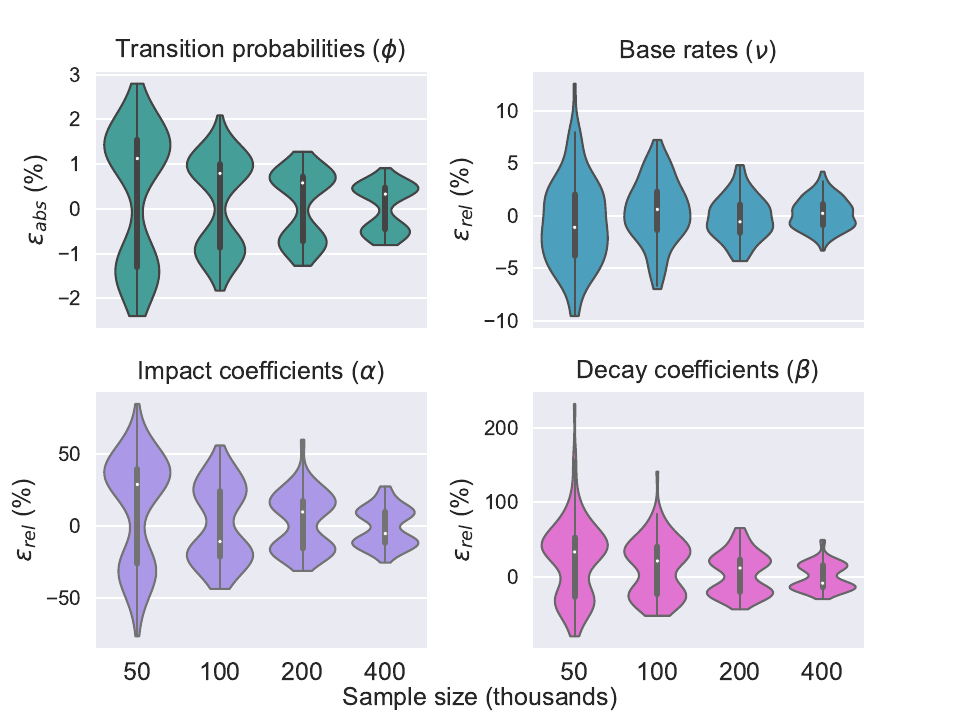}
  \caption{Specification 2.}
  \label{fig:convergence1}
\end{subfigure}
\caption{Violin plots of the worst estimation errors \eqref{eq:worst_error} under two different sets of parameter values (specifications 1 and 2). For every specification and sample size $N$, we simulate 100 paths with sample size $N$ and perform \mikko{ML} estimation for each of them. The true parameters are used as the initial guess in the optimisation procedure to speed up \mikko{estimation} and reduce the computation\mikko{al cost}.}
\label{fig:convergence}
\end{figure}

\subsubsection{Numerical optimisation}\label{app:numerical_opt}

Similarly to \mikko{ordinary} Hawkes processes \citep{Lu2018}, the maximum likelihood (ML) search \eqref{eq:optimisation_problem} is broken down into $d_e$ separate optimisation problems. For every $e\in\eventSpace$, we have an independent optimisation problem that involves only the parameters $\baseRate{e}$, $\collection{\impactCoeff{e'}{x}{e}}{e'\in\eventSpace, x\in\stateSpace}$, $\collection{\decayCoeff{e'}{x}{e}}{e'\in\eventSpace, x\in\stateSpace}$, and which we solve \mikko{using} a \mikko{conjugate gradient} method.

More precisely, we call the \mikko{\texttt{minimize}} function in the \mikko{\texttt{scipy.optimize}} Python package and use the method \mikko{\texttt{TNC}}. Three random sets of parameters are generated and used as \mikko{alternative} initial guesses. \mikko{An ordinary} Hawkes process ($\numberOfStates=1$) is estimated before the state-dependent one and its parameters are also used as an initial guess. Moreover, the estimate of the previous day is used as another initial guess and, thus, a total of five different initial guesses are employed. For \mikko{$\mathrm{Model}_{\mathrm{QI}}$}, around 50 iterations and 400 function evaluations are required for \mikko{each day, event type $e\in\eventSpace$ and initial guess.}

As this optimisation problem is non-convex, the above procedure might \mikko{converge to a mere} local maximum instead of the global one \citep{Lu2018}. Nevertheless, \mikko{in a Monte Carlo experiment, reported in the following subsection}, this conjugate gradient method returns estimates that are consistently concentrated \mikko{near} the true \mikko{parameter values} (see also Figure \ref{fig:kernels_uncertainty}).

\begin{figure}[!t]
  \centering
  \includegraphics[width=0.78\linewidth]{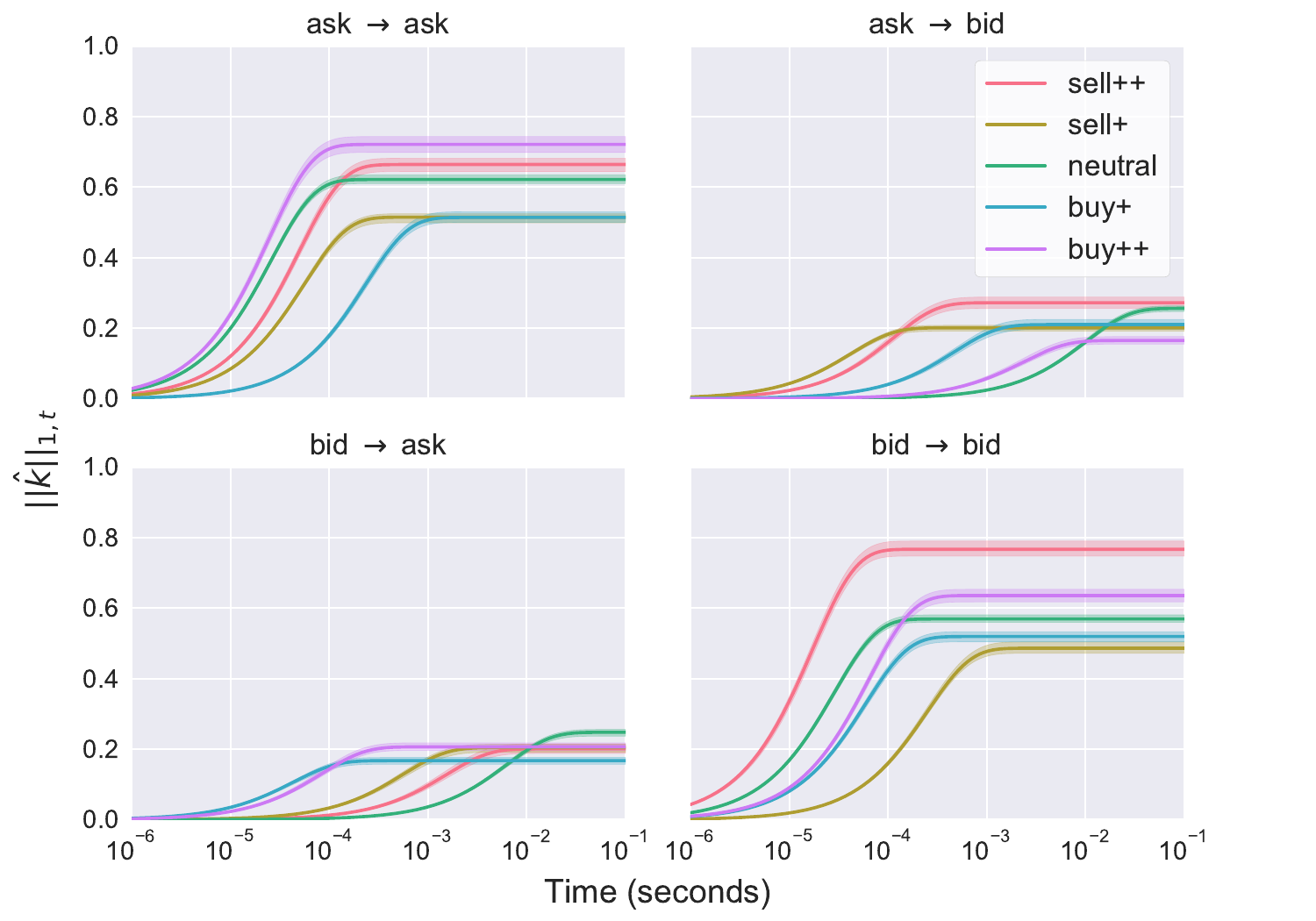}
  \caption{Uncertainty quantification of the \mikko{estimated excitation profiles for INTC on 13 February 2018 under $\mathrm{Model}_{\mathrm{QI}}$.} 
  \mikko{The parametric bootstrap procedure involves simulating} 100 paths \mikko{covering} 2.5 hours \mikko{of trading} using the \mikko{ML} estimate\mikko{s of the} parameters on the considered day and \mikko{applying ML} estimation \mikko{again to each of the simulated paths}. Three random sets of parameters are used as the initial guess in the optimisation procedure. We use the 100 estimates to compute a \maxime{99}\%-confidence interval for the \mikko{truncated kernel norm} (translucent area). \mikko{The solid line corresponds to the ML estimates using the original INTC data.}}
  \label{fig:kernels_uncertainty}
\end{figure}

\begin{table}[!p]
\maxime{
\tiny
\centering
\begin{subtable}[b]{0.99\textwidth}
\centering
\begin{tabular}{llllllllll}
\toprule
\multicolumn{2}{l}{$\impactCoeff{i}{1}{j}$} & \multicolumn{2}{l}{$\impactCoeff{i}{2}{j}$} & \multicolumn{2}{l}{$\impactCoeff{i}{3}{j}$} & \multicolumn{2}{l}{$\impactCoeff{i}{4}{j}$} & \multicolumn{2}{l}{$\impactCoeff{i}{5}{j}$} \\
\midrule
11,314 & 2,311           & 8,098 & 3,556           & 17,016  &  270          & 3,198  & 535         & 34,747  &  220         \\
139 & 33,460           & 724 & 4,497           & 178 &  15,497        & 3,778 &  7,943         & 2,353 &  12,452 \\
\bottomrule
\end{tabular}
\caption{Impact coefficients.}
\vspace*{2em}	
\end{subtable}
\begin{subtable}[b]{0.99\textwidth}
\centering
\begin{tabular}{llllllllll}
\toprule
\multicolumn{2}{l}{$\decayCoeff{i}{1}{j}$} & \multicolumn{2}{l}{$\decayCoeff{i}{2}{j}$} & \multicolumn{2}{l}{$\decayCoeff{i}{3}{j}$} & \multicolumn{2}{l}{$\decayCoeff{i}{4}{j}$} & \multicolumn{2}{l}{$\decayCoeff{i}{5}{j}$} \\
\midrule
17,480 &  9,563  & 15,187 & 19,637  & 29,890  & 1,456 &  6,493  & 3,071   & 45,546  & 1,097        \\
613 & 44,994  & 4,614  & 8,384  & 1,026  &  27,522  &  20,537 & 14,272 &  10,443 & 18,988 \\
\bottomrule   
\end{tabular}
\caption{Decay coefficients.}
\vspace*{2em}
\end{subtable}
\begin{subtable}[b]{0.49\textwidth}
\centering
\begin{tabular}{llllllllll}
\toprule
\multicolumn{5}{l}{$\transitionProbabilityIfEventFromAtoB{1}{i}{j}$}    & \multicolumn{5}{l}{$\transitionProbabilityIfEventFromAtoB{2}{i}{j}$}    \\
\midrule
.89 & .02 &  .05   & .03  & .01   &   .78   & .19 & .02 & <.01 & <.01 \\
.13 &  .83 &  .01 &  <.01 &  .02 & <.01 &  .75  & .2 &  .02 &  <.01 \\
.02 &  .17  & .79  & <.01 &  .02 &  .02 &  <.01 &  .79 &  .17 &  .02 \\
<.01 & .02 &  .21  & .75  & .01 & .02 &  <.01 &  <.01   & .82  & .14 \\
<.01 & <.01 &  .02 &  .19  & .78 & <.01 &  .03  & .06  & .02 &  .89 \\
\bottomrule
\end{tabular}
\caption{Transition distribution.}
\end{subtable}
\begin{subtable}[b]{0.49\textwidth}
\centering
\begin{tabular}{l}
\toprule
$\baseRate{i}$  \\
\midrule
2.2  \\
2.2 \\
\bottomrule
\end{tabular}
\caption{Base rate vector.}
\end{subtable}
\caption{Parameter values for Specification 1.}
\label{table:spec1}
}
\end{table}

\begin{table}[!p]
\maxime{
\tiny
\centering
\begin{subtable}[b]{0.49\textwidth}
\centering
\begin{tabular}{llllllllll}
\toprule
\multicolumn{2}{l}{$\impactCoeff{i}{1}{j}$} & \multicolumn{2}{l}{$\impactCoeff{i}{2}{j}$} & \multicolumn{2}{l}{$\impactCoeff{i}{3}{j}$} & \multicolumn{2}{l}{$\impactCoeff{i}{4}{j}$} & \multicolumn{2}{l}{$\impactCoeff{i}{5}{j}$} \\
\midrule
2           & 10           & 1            & 3           & 1,000         & 30          & 2,000          & 40         & 100         & 1,000         \\
10           & 2           & 3            & 1           & 20           & 3,000        & 2,000          & 50         & 60          & 2,000 \\
\bottomrule
\end{tabular}
\caption{Impact coefficients.}
\vspace*{2em}	
\end{subtable}
\hspace*{0.5em}
\begin{subtable}[b]{0.49\textwidth}
\centering
\begin{tabular}{llllllllll}
\toprule
\multicolumn{2}{l}{$\decayCoeff{i}{1}{j}$} & \multicolumn{2}{l}{$\decayCoeff{i}{2}{j}$} & \multicolumn{2}{l}{$\decayCoeff{i}{3}{j}$} & \multicolumn{2}{l}{$\decayCoeff{i}{4}{j}$} & \multicolumn{2}{l}{$\decayCoeff{i}{5}{j}$} \\
\midrule
10          & 15          & 8           & 4           & 3,000        & 500         & 6,000         & 160        & 500         & 8,000        \\
15          & 10          & 4           & 8           & 1000        & 5,000        & 10,000        & 300        & 120         & 5,000 \\
\bottomrule   
\end{tabular}
\caption{Decay coefficients.}
\vspace*{2em}
\end{subtable}
\begin{subtable}[b]{0.49\textwidth}
\centering
\begin{tabular}{llllllllll}
\toprule
\multicolumn{5}{l}{$\transitionProbabilityIfEventFromAtoB{1}{i}{j}$}    & \multicolumn{5}{l}{$\transitionProbabilityIfEventFromAtoB{2}{i}{j}$}    \\
\midrule
.7 & .3 & .0   & .0   & .0   & .0   & .1 & .2 & .3 & .4 \\
.1 & .8 & .1 & .0   & .0   & .2 & .1 & .4 & .2 & .1 \\
.0   & .1 & .6 & .2 & .1 & .1 & .3 & .1 & .3 & .2 \\
.2 & .2 & .3 & .1 & .2 & .0   & .0   & .1 & .8 & .1 \\
.1 & .3 & .3 & .1 & .2 & .1 & .0   & .1 & .1 & .7 \\
\bottomrule
\end{tabular}
\caption{Transition distribution.}
\end{subtable}
\begin{subtable}[b]{0.49\textwidth}
\centering
\begin{tabular}{l}
\toprule
$\baseRate{i}$  \\
\midrule
5  \\
1 \\
\bottomrule
\end{tabular}
\caption{Base rate vector.}
\end{subtable}
\caption{Parameter values for Specification 2.}
\label{table:spec2}
}
\end{table}

\subsubsection{Finite-sample performance} \label{subsec:test_estimation}

\mikko{We assess the finite-sample performance of the ML estimator in a small Monte Carlo experiment using a state-dependent Hawkes process with} exponential \mikko{kernel} of the form \eqref{eq:kernels_exp} when $\numberOfEvents=2$ and $\numberOfStates=5$. The parameters are naturally split into four groups:\ the transition probabilities \mikko{in} $\transitionProbabilities$, the base rates \mikko{in} $\baseRates$, the impact coefficients \mikko{in} $\impactCoeffs$ and the decay coefficients \mikko{in} $\decayCoeffs$. For \mikko{each} group of parameters $\collection{\theta_{i_j}}{j=1,\ldots,p_i}$ \mikko{and their estimator} $\collection{\hat\theta_{i_j}}{j=1,\ldots,p_i}$, \mikko{we} define the \emph{worst relative error} as
\begin{equation} \label{eq:worst_error}
	\varepsilon_{rel} := \frac{\hat\theta_{i_{j^\star}}-\theta_{i_{j^\star}}}{\theta_{i_{j^\star}}},\quad \mbox{where}\quad  j^\star = \argmax_{j} \frac{|\hat\theta_{i_{j}}-\theta_{i_{j}}|}{\theta_{i_{j}}}.
\end{equation}
For two different sets of parameter values and four different sample sizes, we estimate the distribution of $\varepsilon_{rel}$ for each of the four groups of parameters \mikko{using Monte Carlo via Algorithm \ref{alg:Ogata}}. \mikko{However, for the transition probabilities, we measure instead the worst \emph{absolute} error, defined by replacing the denominator in \eqref{eq:worst_error} by one.} The first set of parameter values (\mikko{S}pecification 1, \maxime{Table \ref{table:spec1}}) is constructed simply by averaging the daily estimates of $\mathrm{Model}_{\mathrm{QI}}$ using INTC data over the \maxime{19 trading days of February 2018}. The second set of parameter values (\mikko{S}pecification 2, \maxime{Table \ref{table:spec2}}) \mikko{is artificial}, \mikko{chosen to produce more drastic} changes in behaviour from one state to another. The results are displayed in Figure \ref{fig:convergence} and \mikko{indeed support the conjectured consistency of the ML estimators of the state-dependent Hawkes process.} \maxime{Note that the observed bimodality is due to the fact that the worst relative error inherently alternates between positive and negative values.}

\subsubsection{Uncertainty quantification via parametric bootstrap}

\mikko{The uncertainty of ML estimates of the parameters of a state-dependent Hawkes process can be quantified using a parametric bootstrap procedure. The procedure entails simulating realisations of the state-dependent Hawkes process under the estimated parameter values using Algorithm \ref{alg:Ogata} and then applying ML estimation again to each simulated realisation, producing a sample of estimates that approximates the distribution of the ML estimator. To exemplify the method, we apply it here to $\mathrm{Model}_{\mathrm{QI}}$ estimates for INTC on 13 February 2018. T}he results are presented in Figure \ref{fig:kernels_uncertainty}\mikko{, and we observe that the} uncertainty \mikko{of} the estimates is negligible compared to the \mikko{estimated excitation patterns}.

\section*{Acknowledgements}

Maxime Morariu-Patrichi gratefully acknowledges the Mini-DTC scholarship awarded by the Mathematics Department of Imperial College London. Mikko S. Pakkanen acknowledges partial support from CREATES (DNRF78), funded by the Danish National Research Foundation, \mikko{from EPSRC through the Platform Grant EP/I019111/1} and from the CFM--Imperial Institute of Quantitative Finance.

\bibliographystyle{apalike}
{\footnotesize
\bibliography{library}}

\end{document}